\documentclass[journal,twoside,web]{ieeecolor}
\pdfoutput=1
\usepackage{generic}
\usepackage{cite}
\usepackage{amsmath,amssymb,amsfonts}
\usepackage{mathtools}
\usepackage{graphicx}
\usepackage{algorithm,algorithmic}
\usepackage{hyperref}
\hypersetup{hidelinks=true}
\usepackage{textcomp}

\usepackage{amsthm}
\usepackage{enumitem}
\usepackage{comment}
\usepackage[font=footnotesize]{subcaption}
\usepackage[font=footnotesize]{caption}

\newcommand{\xx}{\boldsymbol{x}}
\newcommand{\XX}{\mathcal{X}}
\newcommand{\balpha}{\boldsymbol{\alpha}}
\newcommand{\bell}{\boldsymbol{\ell}}
\newcommand{\bbeta}{\boldsymbol{\beta}}
\newcommand{\bg}{\boldsymbol{g}}

\newcommand{\bellstar}{\boldsymbol{\ell^*}}

\newcommand{\bzero}{\boldsymbol{0}}
\newcommand{\yy}{\boldsymbol{y}}
\newcommand{\xstar}{\boldsymbol{x^*}}
\newcommand{\xnstar}{\boldsymbol{x}_{n}^{\boldsymbol{*}}}
\newcommand{\xminusnstar}{\boldsymbol{x}_{-n}^{\boldsymbol{*}}}
\newcommand{\balphastar}{\boldsymbol{\alpha^*}}
\newcommand{\RR}{\mathbb{R}}
\newcommand{\MM}{\boldsymbol{M}}

\newcommand{\EE}{\mathbb{E}}
\newcommand{\FF}{\mathcal{F}}
\newcommand{\bb}{\mathfrak{b}}
\newcommand{\Nopt}{\mathcal{N}_{opt}}
\newcommand{\bA}{\boldsymbol{A}}

\newcommand{\ATalpha}{\bA^{\intercal}\balpha}
\newcommand{\AT}{\bA^{\intercal}}

\overrideIEEEmargins                                      
\theoremstyle{definition}
\newtheorem{theorem}{Theorem}
\newtheorem{lemma}[theorem]{Lemma}
\newtheorem{proposition}[theorem]{Proposition}
\newtheorem{remark}{Remark}
\newtheorem{definition}{Definition}

\newtheorem{assumption}{Assumption}

\def\BibTeX{{\rm B\kern-.05em{\sc i\kern-.025em b}\kern-.08em
    T\kern-.1667em\lower.7ex\hbox{E}\kern-.125emX}}

\begin{document}
\title{Learning to Control Unknown Strongly Monotone Games}
\author{Siddharth Chandak, Ilai Bistritz, Nicholas Bambos
\thanks{Siddharth Chandak and Nicholas Bambos are with the Department of Electrical Engineering, Stanford University, Stanford, CA, USA. Ilai Bistritz is with the Department of Industrial Engineering and the School of Electrical Engineering, Tel Aviv Univeristy, Israel. \\
Emails: {\tt chandaks@stanford.edu,\\
ilaibistritz@tauex.tau.ac.il,\\
bambos@stanford.edu}}
}

\maketitle

\begin{abstract}
Consider a strongly monotone game where the players' utility functions include a reward function and a linear term for each dimension, with coefficients that are controlled by the manager. Gradient play converges to a unique Nash equilibrium (NE) that does not optimize the global objective. The global performance at NE can be improved by imposing linear constraints on the NE, also known as a generalized Nash equilibrium (GNE). We therefore want the manager to control the coefficients such that they impose the desired constraint on the NE. However, this requires knowing the players' rewards and action sets. Obtaining this game information is infeasible in a large-scale network and violates user privacy. To overcome this, we propose a simple algorithm that learns to shift the NE to meet the linear constraints by adjusting the controlled coefficients online. Our algorithm only requires the linear constraints violation as feedback and does not need to know the reward functions or the action sets. We prove that our algorithm converges with probability 1 to the set of GNE given by coupled linear constraints. We then prove an L2 convergence rate of near-$O(t^{-1/4})$. 
\end{abstract}

\section{Introduction}
Large-scale networks such as in communication, transportation, energy, and computing consist of a myriad of agents taking local decisions. Agents do not see the bigger picture and instead only optimize their local objectives. The result is often an inefficient equilibrium \cite{christodoulou2005price}. Hypothetically, a centralized authority that knows all network parameters can instruct the agents how to act to optimize a global objective. 
However, such a centralized architecture is infeasible for large-scale
networks and also violates the users' privacy. 

The interaction between the users is formalized as a game \cite{alpcan2002cdma,marden2013distributed}. The players may be cooperative, or they may be selfishly interested
in maximizing their reward. In large-scale systems, obtaining performance
guarantees is challenging even with cooperative players since they
do not know the reward functions of their peers.
A Nash Equilibrium (NE) aims to predict the outcome of such a repeated
interaction. In monotone games \cite{rosen1965existence,tatarenko2019learning,mertikopoulos2019learning}, which are our focus here, simple distributed algorithms such as gradient ascent of players on their reward functions are guaranteed to converge to a NE.

A NE does not optimize global objectives and can lead to poor performance. By controlling parameters in the utility functions of the players, the manager can aim to change the game such that the NE is efficient. In general, setting utility parameters to optimize performance at the NE is intractable. Therefore, in this paper, we focus on the task of imposing linear constraints on the resulting NE of the game the manager controls.  Linear constraints are common in many optimization formulations. When optimizing a global quadratic cost, linear constraints can describe the first-order optimality condition for the NE. In resource allocation, linear constraints can describe what should be the ideal loads at each resource at NE. 

Even when setting utility parameters to impose linear constraints at the NE is tractable, it requires
knowing the reward functions of players and their action sets. This
would require the manager to collect this information from the players,
and then solve a large-scale optimization problem. This centralized
approach violates the privacy of the users and is infeasible in large-scale
networks. Even worse, the manager would need to collect the parameters
periodically if they are time-varying, and then solve the large-scale
optimization problem again.

To converge to a NE that meets target linear constraints in an unknown strongly monotone
game, we propose an online learning approach. In our algorithm, the
manager adjusts the controlled coefficients online and receives the constraint violation as feedback. Such feedback is easy to monitor
in practice and preserves user privacy. Our algorithm is based on two time-scale stochastic approximation, where the agents update their actions in the faster time-scale and the manager updates the controlled coefficients in the slower time-scale. 

We prove that our algorithm
converges to the set of NE that satisfy the target linear constraints with probability 1. We further show that our algorithm has a mean square error rate of near-$O(t^{-1/4})$ for the linear constraints. Due to the convex and compact action sets, our slower time-scale is a fixed point iteration with a non-expansive mapping instead of the stronger contractive mapping that allows for a $O(t^{-1})$ convergence rate \cite{Chandak-TTS-Opti, chen2020finite}. Apart from stochastic approximation, we employ tools from variational inequalities and the Krasnosel’ski\u{i}–Mann (KM) algorithm in our analysis. Therefore, we provide a theoretically principled game control scheme for strongly monotone games.

\subsection{Related Work}

Distributed optimization \cite{nedic2009distributed,molzahn2017survey}
is concerned with optimizing a target objective over
a network of agents. However, these agents are non-interacting computational
units that are not coupled in a game structure. To cast the game
as a distributed optimization with the action profile as the variable,
each agent would need to know the effect of other agents on its reward, which is infeasible. Moreover, distributed optimization
requires significant communication between agents. 

Distributed algorithms that converge to NE have been studied in \cite{mazumdar2020gradient,frihauf2011nash,ye2017distributed}, and in  \cite{tatarenko2020geometric,mertikopoulos2019learning,gadjov2018passivity} for monotone games. The players in our game use simple stochastic gradient play to converge to NE. However, if the game is uncontrolled, the NE does not optimize global objectives. In our game control scheme, the manager learns to steer the game to an efficient NE, which is a GNE with target linear constraints.

Convergence to GNE \cite{GNEP-Facc-survey} in strongly monotone games is usually studied in the noiseless setting \cite{GNEP-Kim, GNEP-Jordan, GNEP-Aussel}, while our learning agents make the convergence to a GNE stochastic. Stochastic convergence to GNEs has recently started to attract attention. In \cite{GNEP-Franci-learning} players share their actions with others, and in \cite{GNEP-Shakarami-learning} the reward function is assumed to be linear. Closer to our scenario is \cite{GNEP-Alizadeh-learning}, which studied a centralized algorithm in a monotone game where the \textit{weighted empirical average} of the actions is shown to converge in expectation to a GNE.  In contrast, our GNE is limited to linear coupled constraints in strongly monotone games. For this scenario, we provide a \textit{distributed} stochastic learning scheme that converges in probability 1 to a GNE, without communication between players. To the best of our knowledge, our work is the first to study distributed and stochastic convergence to a GNE.

Our algorithm is a stochastic iterative approach to solving a constrained optimization problem using Lagrange multipliers. Convergence guarantees for stochastic augmented Lagrangian schemes are scarce. In \cite{constrained_1}, the continuous-time optimization problem was studied, and \cite{constrained_2} provides \textit{asymptotic} results for a variant of our algorithm. We present the convergence rate for our algorithm using novel analysis techniques.

Unlike mechanism design  \cite{heydaribeni2018distributed,deng2020robust}, our players are cooperative or myopic, and our manager does not
elicit any players' private information, but only observes the constraint violation. 

Our approach resembles a Stackelberg game \cite{maharjan2013dependable,fiez2020implicit,birmpas2020optimally}, where the manager is the leader and the players are the followers
(usually a single follower is considered). However, game-theoretic
algorithms converge to a NE that can be globally inefficient. Our work provides a mechanism
that provably shifts the unique NE to a point that satisfies target linear constraints that can force the NE to be efficient (e.g., first-order optimality condition for a global quadratic objective).

Our manager is learning with bandit feedback that is not stochastic
nor fully adversarial \cite{lattimore2020bandit,chandak2023equilibrium}. Instead, the rewards of this bandit are generated by the game dynamics. This ``game bandit''
introduces a new type of noise. The goal of the manager is to
steer the unique NE to a point that satisfies target linear constraints.
Hence, the feedback the manager needs concerns the behavior
at NE. However, the dynamics are not at NE every turn. In fact, by changing the prices, the manager perturbs the system which complicates and delays the convergence to the unique NE. Therefore, constraint violation feedback in this problem suffers from equilibrium noise which is
the distance of the dynamics from the time-varying NE.

Our work belongs to the literature on intervention
and control of games \cite{grammatico2017dynamic,parise2020analysis,parise2019variational,galeotti2020targeting,alpcan2009nash,mguni2019coordinating,brown2017optimal,ferguson2021effectiveness,bistritz2024gamekeeper}.
In \cite{tatarenko2014game} it was assumed that the manager knows
the reward functions of the players, and can compute the desired NE.
This is also the assumption in \cite{alpcan2009control,alpcan2009nash},
that considered the general approach of game-theoretic control. In
\cite{grammatico2017dynamic}, an aggregative game where the cost
is a linear function of the aggregated action was considered. It was
assumed that the manager can control the total loads the players observe,
which can deviate from the true total loads. In \cite{marden2009payoff,melo2011congestion,sandholm2007pigouvian},
a similar approach to ours was considered for discrete congestion
games, where the cost function of all players is the same convex
function of the number of players that share the chosen resource.
The congestion game considered there is a special case of monotone games that allow us to model far more general interactions and action
sets. For example, it allows for models when not only the number of players that share a resource matters but also their identity (e.g., geometric locations). Additionally, the pricing scheme in \cite{marden2009payoff,melo2011congestion,sandholm2007pigouvian}
requires the manager to know the reward functions. Control
of dynamic Markov potential games was considered in \cite{mguni2019coordinating},
where it was assumed that the manager knows the players' reward functions and can simulate the game.

Some works obtained convergence rates for contractive two time-scale stochastic approximation \cite{Konda_rate, Chandak-TTS-Opti}. These works assume that both the time-scales are fixed point iterations with contractive mappings. We consider a more general case where the slower timescale has a non-expansive mapping. 

Our convergence rate is limited by the non-expansive mapping in the slower time-scale, and not by the algorithm players use to update their actions. This non-expansive mapping arises due to the projection of the actions into the action sets in the faster time-scale. When the action sets are $\mathbb{R}^{d}$, our algorithm becomes a two-time-scale iteration with a contractive mapping in both time-scales, with a convergence rate of $O(t^{-1})$ \cite{Chandak-TTS-Opti}. This slowdown is a known challenge in the literature. While the optimal bound known for single time-scale fixed point iterations with contractive mappings is $O(t^{-1})$ \cite{chen2020finite}, the best known bound for iterations with non-expansive mappings is $O(t^{-1/3})$ \cite{Bravo}. Compared to $O(t^{-1/3})$, we attribute the slightly slower convergence rate to the two-time scale dynamics.

An earlier conference paper \cite{bistritz2021online} was limited to load balancing in resource allocation. This paper extends \cite{bistritz2021online} to linear constraints in strongly monotone games. The NE guarantees here are of the form $\bA\xstar(\balpha)=\bellstar$ instead of just $\sum_{n=1}^{N}\xx_{n}\left(\balphastar\right)=\bellstar$. Nevertheless, the results here are novel and independent of \cite{bistritz2021online}. Our convergence guarantees hold both almost surely and in $L_2$, whereas \cite{bistritz2021online} establishes only $L_2$ convergence. Unlike \cite{bistritz2021online}, we provide a finite time convergence error bound. Our novel analysis uses two time-scale stochastic approximation with non-expansive mappings.

\subsection{Outline and Notation}
The rest of the paper is organized as follows. Section \ref{sec:ProblemFormulation}  formulates our game control problem and details our assumptions. Subsection \ref{subsec:Applications} details two applications that fall under our general model.  Section \ref{sec:Algorithm} describes the game control algorithm and states our main results regarding its performance guarantees. Section \ref{sec:sketch} breaks the proofs of our main results into lemmas. Section \ref{sec:simulations} provides numerical simulations for the applications described earlier. Finally, Section \ref{sec:conclusion} concludes the paper and proposes future research directions. 

We use bold letters to denote vectors and matrices, and the standard game-theoretic notation where $\xx_{-n}$ is the vector of actions for all players except player $n$. $\|\cdot\|$ denotes the $\ell_2$-norm throughout the paper when applied on a vector, and the operator norm for matrices (under the $\ell_2$-norm for vectors). 

\section{Problem Formulation}\label{sec:ProblemFormulation}
Consider a set of players $\mathcal{N}=\{1,2,\ldots,N\}$, where player $n$ chooses an action $\xx_n=(x_n^1,\ldots,x_n^d)\in\XX_n$. Here $\XX_n\subset\RR^d_+$ is a convex and compact set such that $\bzero\in\XX_n$. Let $\XX=\XX_1\times\ldots\times\XX_N$. We use $\xx\in\XX$ to denote the concatenation of all action vectors. The utility for each player is defined as:
\begin{equation}\label{utility}
     u_{n}(\xx;\beta_{n}^{1},\ldots,\beta_{n}^{d})=r_{n}(\xx)-\sum_{i=1}^{d}\beta_{n}^{i}x_{n}^{i}.
\end{equation}
Here $r_n(\xx)$ is the reward function for player $n$. We assume that $r_n(\xx)$ is twice continuously differentiable for each $n$. The vector $\bbeta\in\RR^{Nd}$ denotes the `controlled coefficients' controlled by the game manager. We define the same vector notation for $\bbeta$ and $\bbeta_n$ as we do for $\xx$ and $\xx_n$, respectively. Such a linear term arises when introducing price per use of a resource, or when agents are cooperative and can  compute such a term since they know their own action. 

\begin{figure}[t]
\includegraphics[width=8.75cm,height=12cm,keepaspectratio]{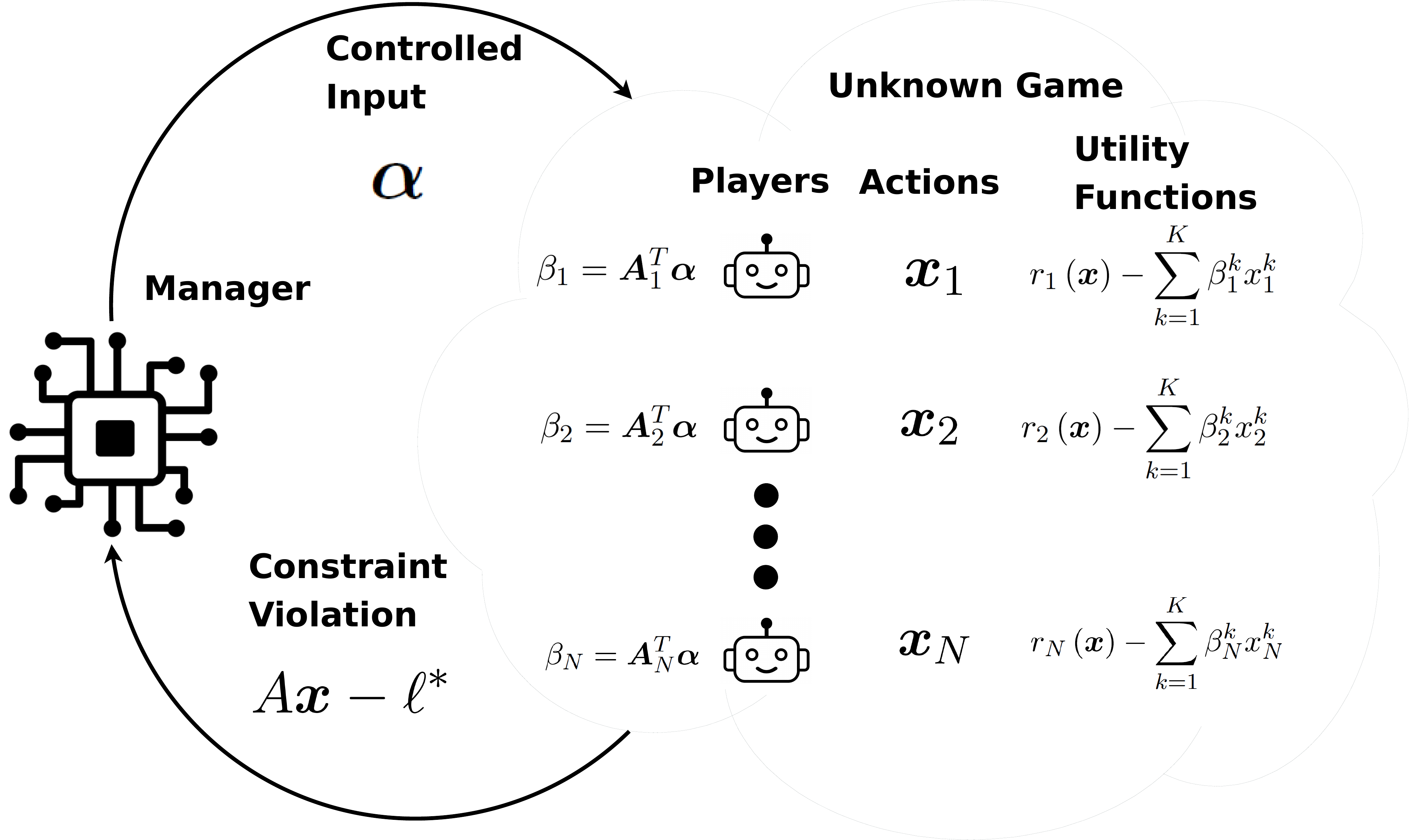}

\caption{\label{fig:system}Control of an unknown game}
\end{figure}

The key assumption is that the game is strongly monotone. For this, define the gradient operator $F(\cdot):\RR^{Nd}\mapsto\RR^{Nd}$:
\begin{equation}\label{F-defn}
    F(\xx)\coloneqq(\nabla_{\xx_1}r_1(\xx_1,\xx_{-1}),\ldots, \nabla_{\xx_N}r_N(\xx_N,\xx_{-N})).
\end{equation}

\begin{assumption}\label{assu-monotone}
There exists $\mu>0$ such that $\forall\xx,\yy\in\XX$, 
    \begin{equation}\label{F-mono}
      \langle\yy-\xx,F(\yy)-F(\xx)\rangle\leq -\mu\|\yy-\xx\|^2.
 \end{equation}

\end{assumption}
Thus, the game is strongly monotone with parameter $\mu>0$, or equivalently the function $-F(\cdot)$ is $\mu$-strongly monotone. Monotone games are a well-studied class of games \cite{VI-FP}. The linear term in \eqref{utility} does not affect the strongly monotone property of the game since the gradient vector becomes $F(\xx)-\bbeta$. In this strongly monotone game, there exists a unique pure NE $\xstar(\bbeta)$ corresponding to $\bbeta$, defined as:
\begin{definition}
    An action profile $(\xstar_n(\bbeta),\xstar_{-n}(\bbeta))$ is a pure Nash equilibrium (NE) corresponding to coefficients $\bbeta$, if $u_n(\xnstar(\bbeta),\xminusnstar(\bbeta);\bbeta)\geq u_n(\xx_n,\xminusnstar(\bbeta);\bbeta)$, for all $\xx_n\in\XX$ and all $n\in\mathcal{N}$. 
\end{definition}

We assume that the gradients of the players are Lipschitz continuous, which is a standard assumption \cite{smooth-ex1,smooth-ex2}.
\begin{assumption} 
There exists $L>0$ such that $\forall \xx,\yy\in\XX$,
\begin{equation}\label{F-lip}
     \|F(\yy)-F(\xx)\|\leq L\|\yy-\xx\|.
\end{equation}
\end{assumption}
 
Our game evolves in discrete time, so at time $t$ player $n$ plays $\xx_{n,t}$ given the controlled coefficients $\bbeta_{n,t}$. The players update their actions using gradient ascent and the controlled coefficients are adjusted based on the manager's objective (discussed below \eqref{manager-obj}). Since the game is strongly monotone, if the controlled coefficients are fixed at $\bbeta$ (i.e., uncontrolled game), the players will converge to the NE $\xstar(\bbeta)$ \cite{rosen1965existence}. 

In many scenarios, players do not fully know their reward functions but instead learn them on the fly based on the reward values they
receive or other data they collect. The gradient
can then be approximated from this data  \cite{leng2020learning}. These
learning schemes make the players' actions stochastic, which adds another
type of noise to our analysis.

\begin{assumption}
    Each player $n$ observes $\bg_{n,t-1}$ at time $t$, where $\bg_{n,t-1}=\nabla_{\xx_n}r_n(\xx_{t-1})+\MM_{n,t}$. Then $\bg_{t-1}=F(\xx_{t-1})+\MM_t$, where $\MM_t$ satisfies 
    \begin{equation}\label{Mart-bound}
        \EE[\MM_{t}\mid\FF_{t-1}]=0 \;\;\;\textrm{and}\;\;\;\EE[\|\MM_{t}\|^2\mid\FF_{t-1}]\leq M_c,
    \end{equation}
    for some $M_c>0$ and a filtration $\FF_t$ that summarizes the past:
    $$\FF_t=\sigma(\xx_{\tau}, \bbeta_{\tau}\mid\forall \tau\leq t).$$
\end{assumption}

The idea behind this assumption is that from the point of view of the players, the rewards are stochastic, i.e., $r_{n}(\xx)\triangleq r_{n}(\xx;\xi)$ for an unknown random variable $\xi$. Then the optimal estimator of the gradient of player $n$ from past observations is $\bg_{n,t}\triangleq\mathbb{E}\left\{\nabla_{\xx_n}r_n(\xx_{t})\mid\mathcal{F}_{t}\right\}$. This makes the estimation error $\MM_{n,t+1}=\bg_{n,t}-\nabla_{\xx_n}r_n(\xx_{t})$ a Martingale difference sequence, and $\left\Vert \MM_{n,t}\right\Vert$ is bounded if $\nabla_{\xx_n}r_n(\xx)$ is bounded. Gradient estimation from bandit feedback does not satisfy Assumption 3, which requires unbiased estimates.

The objective of the manager is to steer the players' NE towards a point that satisfies a set of $K$ linear constraints, which can together be represented in vector form as 
\begin{equation}\label{manager-obj}
    \bA\xx=\bellstar.
\end{equation} 
Here $\bA\in\RR^{K\times Nd}$ and $\bellstar\in\RR^{K}$. We assume that, at each time $t$, the manager only observes the constraint violation, i.e., the vector $(\bA\xx_{t-1}-\bellstar)$. These linear constraints can have different meanings for different applications. For resource allocation, the players' actions are how much to use of each resource and the manager wants to meet the target loads. When the global objective is a quadratic cost, the constraint violation vector is just the gradient of the global quadratic cost.

\subsection{Applications \label{subsec:Applications}}

\textcolor{black}{In this section, we show how our framework can be
applied in two different applications.}

\subsubsection{Quadratic Global Objective}
For a positive semi-definite matrix $\boldsymbol{G}\in\RR^{Nd\times Nd}$ and $\boldsymbol{\rho}\in\mathbb{R}^{Nd}$, the controller can set a global cost of the form $\Phi\left(\boldsymbol{x}\right)=\boldsymbol{x}^{T}\boldsymbol{G}\boldsymbol{x}+\boldsymbol{\rho}^T\boldsymbol{x}$. All the action profiles that are optima of $\Phi\left(\boldsymbol{x}\right)$ satisfy $\nabla\Phi\left(\boldsymbol{x}\right)=2\boldsymbol{G}\boldsymbol{x}+\boldsymbol{\rho}=0$. Hence, using our approach, we can set $\bA=2\boldsymbol{G}$ and $\bellstar=-\boldsymbol{\rho}$ to reach a global optimum.

\subsubsection{Weighted Resource Allocation Games}
In resource allocation,  high congestion on a single resource increases the operation costs for the manager, can lead to system failures, and incurs suboptimal performance for the users. Therefore, load balancing is a key control objective for the manager who supervises the resource allocation protocol. In the electricity grid, the manager can be the utility company running Demand Side Management (DSM) to cut the production costs of energy at peak hours. In wireless networks, the manager can be an access point coordinating a protocol that minimizes the power consumption of the devices and the interference in the network. Other examples are load balancing in data-centers \cite{bourke2001server} and control of parking resources \cite{dowling2017optimizing}. For distributed resource allocation, designing an efficient load balancing protocol becomes far more challenging. 

In weighted resource allocation, the player's action is how much they use from each of the $d$ resources. We have $K=d$ linear constraints, of the form  $\sum_{n=1}^Na_{n,i}^{(i)}x_n^{i}=\ell^*_{i}$ for the target load for each resource $i$. The manager sets $\bellstar$ to optimize application-specific objectives such as system efficiency, regulations, and operation costs. The controlled coefficients $\bbeta$ represent incentives to use a resource. The weights in $\bA$ are based on how each agent's use affects the resource. 

In wireless networks, the interference that the manager measures is a weighted sum of the transmission powers of the players, where the weights are the channel gains. In data centers, different users send jobs of different sizes to different servers (resources), so the load is a weighted sum of their actions (how many jobs to send where). A special case is resource allocation without weights, studied in \cite{bistritz2021online}. 

\section{Game Control Algorithm}\label{sec:Algorithm}
\begin{algorithm}[t]
\caption{\label{alg:DSM}Online Game Control}

\textbf{Initialization: }Let $\boldsymbol{x}_{0}\in\mathcal{X}$ and
$\balpha_{0}\in\RR^K$.

\textbf{Input: }Target vector $\bellstar$.

\textbf{For each turn $t\geq1$ do}
\begin{enumerate}
\item The manager observes the vector $\bA\xx_{t-1}-\bellstar$. 
\item Each player $n$ computes $\bbeta_{n,t-1}=\AT_{n}\balpha_{t-1}$ and updates its action using gradient ascent:
\begin{equation}\label{x-iter}
\boldsymbol{x}_{n,t}=\Pi_{\mathcal{X}_{n}}\left(\boldsymbol{x}_{n,t-1}+\eta_{t-1}\left(\bg_{n,t-1}-\bbeta_{n,t-1}\right)\right)
\end{equation}
where $\Pi_{\mathcal{X}_{n}}$ is the Euclidean projection into $\mathcal{X}_{n}$.

\item The manager updates the controlled input using
\begin{equation}\label{alpha-iter}
    \balpha_{t}=\balpha_{t-1}+\epsilon_{t-1}(\bA\xx_{t-1}-\bellstar).
\end{equation}
\item The manager broadcasts $\balpha_{t}$ to the agents.
\end{enumerate}
\textbf{End}
\end{algorithm}

Algorithm \ref{alg:DSM} details the combined scheme of the manager's
iteration and the players' behavior (modeled or designed). We discuss the key details of the algorithm in this section.

The manager controls the controlled coefficients $\bbeta$ in order to steer the dynamics towards a desirable point. 
But instead of directly performing computations on $\bbeta\in\RR^{Nd}$ and communicating it to agents, the manager uses the ``control input'' $\balpha\in\RR^K$, where $K$ is the number of linear constraints, which can be much smaller than $Nd$ depending on the application. After computing $\balpha_t$ at time $t$, the manager broadcasts $\balpha_t$ to the agents who then compute their $\bbeta_{n,t}$ using $\bbeta_{n,t}=\AT_n\balpha$. Here $\AT_n$ denotes the $d$ rows corresponding to player $n$ in the matrix $\AT$. Our algorithm requires each player to know only its corresponding $d$ rows in $\AT$. Note that $\bbeta=\AT\balpha$. Depending on the application, this row can represent local parameters known to the player (e.g., the size of the jobs the player sends to the server), or that the manager sends these parameters to players at the beginning of the algorithm.

The main challenge in guaranteeing that the dynamics converge to a point where the linear constraints are satisfied
is that the manager does not know the game, consisting of $\left\{ r_{n}\right\} $
and $\left\{ \mathcal{X}_{n}\right\} $. To overcome this, we propose
an online approach where the manager learns how to adjust
$\balpha_t$ to satisfy the linear constraints by using the iteration
in \eqref{alpha-iter} with the control step-size sequence $\left\{ \varepsilon_{t}\right\} $.
We only assume that the manager can observe the instantaneous
constraint violation $\bA\xx_{t-1}-\bellstar$ at the start of turn $t$. This feedback maintains basic privacy for the players since the manager does not know their reward functions and does not even observe their individual
actions.

Our algorithm is a two time-scale stochastic approximation iteration. In the faster time-scale, dictated by the stepsize $\{\eta_t\}$, the agents update their action using gradient play, taking a step towards the NE corresponding to $\balpha_t$. In the slower timescale, dictated by the stepsize $\{\epsilon_t\}$, the manager observes the constraint violation based on the players' current action and updates $\balpha_t$ to steer the player towards a desirable point.

Our players run SGD to optimize their rewards, and therefore are myopic and do not manipulate the manager using
dynamic strategies. This behavior is expected from selfish players in a large-scale network or from cooperative players. 

Selfish players are likely to maximize their rewards
using a common algorithm such as gradient ascent
\cite{mertikopoulos2019learning}.
In particular, when the players do not know their reward functions, small SGD steps allow for gradual learning, as opposed to best-response dynamics that can have large jumps. Nevertheless, our techniques can be applied to other distributed algorithms that are guaranteed to converge to NE. 

The behavior of \textit{cooperative} players is not something to model but to design \cite{bistritz2020cooperative}. The algorithm
is used as a distributed protocol programmed into the devices in the
network (e.g., WiFi). The challenge is then not to control the selfish
behavior of the players, but to guide them towards a more efficient NE they do not have enough information to find otherwise (i.e., players
do not know which $\balpha$ would satisfy the linear constraints at NE).
For this purpose, gradient play is an efficient distributed algorithm, that can lead to an efficient NE when combined with the manager's algorithm in \eqref{alpha-iter}. 

The manager's algorithm in \eqref{alpha-iter} is robust against strategic
manipulations of a small group of players since each player has a
negligible effect on $\bA\xx_{t-1}$, which
is its only way to impact \eqref{alpha-iter}. We conjecture that even a large group of players cannot manipulate \eqref{alpha-iter}
to increase their accumulated reward over time since any manipulated
change to $\balpha_{t}$ would be temporary and incurs
losses to the players. Moreover, truthful reporting is not an issue
since the players do not report anything, and the
manager cannot observe their actions. Instead, the manager only needs
to measure the instantaneous constraint violation.

\subsection{Performance Guarantees}\label{subsec:results}

The manager wants to converge to a suitable $\balpha$ such that the linear constraints $\bA\xx=\bellstar$ are satisfied at the NE corresponding to $\balpha$, i.e., converge to the set $$\Nopt=\{\balpha\mid \bA\xstar(\balpha)=\bellstar\},$$
where $\xstar(\balpha)$ denotes the NE corresponding to $\balpha$. 

We assume that Slater's condition \cite{convexbook} for our problem holds, which allows us to prove that $\Nopt$ is non-empty:
\begin{assumption}\label{assu-existence}
    The set of action profiles is of the form $\XX=\{\xx\mid h_i(\xx)\leq 0, i=1,\ldots,M\}$ for some $M$ where each $h_i(\xx)$ is continuously differentiable and convex and there exists $\xx\in\text{int}(\XX)$ (i.e., $h_i(\xx)<0$, $i=1,\ldots,M$) such that $\bA\xx=\bellstar$.
\end{assumption}

We also make the following standard assumptions on our step-size sequences \cite{Doan}. 
\begin{assumption}\label{assu-stepsize}
    The stepsizes $\{\eta_t\}$ and $\{\epsilon_t\}$ are of the form $\eta_t=1/(t+T_1)^\eta$ and $\epsilon_t=1/(t+T_2)^\epsilon$ for some $T_1,T_2>0$ and $0.5<\eta,\epsilon<1$ such that $\lim_{t\to\infty}\frac{\epsilon_t^2}{\eta_t^3}=0$.
\end{assumption}
The assumption that $0.5<\eta,\epsilon<1$ ensures that the stepsize sequences are monotonically decreasing and square-summable but not summable. The assumption that $\lim_{t\to\infty}\frac{\epsilon_t^2}{\eta_t^3}=0$ quantifies the required separation between the time-scales.


Our first result shows that $\balpha_t$ converges to $\Nopt$ and hence $\bA\xx_t$ converges to $\bellstar$. 
\begin{theorem}\label{thm:converge}
Under Assumptions 1-5, $\balpha_t$ converges to the set $\Nopt$, $\xx_t$ converges to $\xstar(\balpha_t)$, and with probability 1: $$\lim_{t\rightarrow\infty}\bA\xx_t=\bellstar.$$

\end{theorem}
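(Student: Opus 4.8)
The plan is to treat Algorithm~\ref{alg:DSM} as a two time-scale stochastic approximation in which the fast iterate $\xx_t$ tracks the slowly-varying equilibrium map $\xstar(\balpha_t)$, and the slow iterate $\balpha_t$ behaves asymptotically like a Krasnosel'ski\u{\i}--Mann fixed-point iteration for a non-expansive map whose fixed-point set is exactly $\Nopt$. First I would establish the basic regularity facts about the equilibrium map: since the game with gradient $F(\xx)-\AT\balpha$ is $\mu$-strongly monotone and $L$-Lipschitz (Assumptions~1--2), standard variational-inequality arguments show $\xstar(\balpha)$ is well-defined, single-valued, and Lipschitz in $\balpha$ with constant $\Anorm/\mu$. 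I would also record boundedness: because each $\XX_n$ is compact, $\xx_t$ lies in the fixed compact set $\XX$ for all $t$, which immediately bounds $\bA\xx_t-\bellstar$ and hence controls the size of the slow-timescale increments.

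Next I would handle the fast timescale. Writing the error $\zz_t = \xx_t - \xstar(\balpha_t)$, the projection step \eqref{x-iter} combined with strong monotonicity gives a one-step contraction of the form $\EE[\|\zz_{t+1}\|^2\mid\FF_t] \le (1-2\mu\eta_t + O(\eta_t^2))\|\zz_t\|^2 + O(\eta_t^2) + (\text{drift from }\balpha_{t+1}-\balpha_t)$, where the drift term is $O(\|\balpha_{t+1}-\balpha_t\|\cdot\|\zz_t\|) = O(\epsilon_t)$ in expectation by the Lipschitz property of $\xstar(\cdot)$. Using the time-scale separation $\epsilon_t^2/\eta_t^3 \to 0$ from Assumption~5(b) together with the step-size conditions in 5(a), a standard stochastic-approximation lemma (e.g.\ a Robbins--Siegmund or supermartingale argument, as in \cite{Doan}) yields $\zz_t \to 0$ with probability~1, i.e.\ $\xx_t \to \xstar(\balpha_t)$ a.s.; this is the second claim of the theorem.

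Then I would analyze the slow timescale. Substituting $\xx_{t-1} = \xstar(\balpha_{t-1}) + \zz_{t-1}$ into \eqref{alpha-iter} gives $\balpha_t = \balpha_{t-1} + \epsilon_{t-1}(\bA\xstar(\balpha_{t-1})-\bellstar) + \epsilon_{t-1}\bA\zz_{t-1}$, where the last term is an asymptotically negligible perturbation since $\zz_{t-1}\to0$. The key structural observation is that the map $T(\balpha) \coloneqq \balpha + \bA\xstar(\balpha) - \bellstar$ is non-expansive (not a contraction): using strong monotonicity of $F$ and the characterization of $\xstar$ as a solution to a VI, one shows $\langle \balpha-\balpha', \bA\xstar(\balpha)-\bA\xstar(\balpha')\rangle \le 0$, which after rescaling $\bA$ (or choosing $\epsilon_t$ suitably small, exploiting compactness to bound the relevant operator) makes $T$ averaged/non-expansive with $\mathrm{Fix}(T) = \Nopt$. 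This is exactly the setting where the KM iteration with step-sizes satisfying Assumption~5(a) converges. Invoking the KM convergence theorem together with Assumption~4 (Slater's condition, ensuring $\Nopt \neq \emptyset$) and absorbing the vanishing perturbation $\bA\zz_{t-1}$ and the martingale noise via the step-size summability, I would conclude $\balpha_t \to \Nopt$ a.s. Finally, combining $\balpha_t \to \Nopt$, the continuity of $\bA\xstar(\cdot)$, and $\xx_t - \xstar(\balpha_t)\to0$ gives $\bA\xx_t = \bA\xstar(\balpha_t) + \bA\zz_t \to \bellstar$ a.s.

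The main obstacle is the non-expansiveness of $T$: unlike the contractive two time-scale results of \cite{Doan, Konda_rate}, there is no geometric pull toward $\Nopt$, so one cannot use a simple Lyapunov contraction for the slow iterate. I expect the delicate part to be showing that the KM-type argument still goes through when the fixed-point iteration is corrupted by (i) the Markovian/equilibrium noise $\bA\zz_{t-1}$ coming from the fast timescale and (ii) the martingale-difference gradient noise, both of which must be shown to vanish fast enough relative to $\epsilon_t$ to not destroy the Fejér-monotonicity-type estimate underlying KM convergence; this is where the precise step-size coupling in Assumption~5 does the essential work.
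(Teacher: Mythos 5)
Your proposal follows the same overall architecture as the paper's proof: a two time-scale analysis in which $\xx_t$ tracks $\xstar(\balpha_t)$ on the fast time-scale, and $\balpha_t$ behaves like a perturbed Krasnosel'ski\u{\i}--Mann iteration for a non-expansive map whose fixed-point set is $\Nopt$. However, there is a genuine gap at the central step. You assert that the VI characterization yields $\langle\balpha-\balpha',\bA\xstar(\balpha)-\bA\xstar(\balpha')\rangle\le 0$ and that this, ``after rescaling $\bA$,'' makes $T(\balpha)=\balpha+\gamma(\bA\xstar(\balpha)-\bellstar)$ non-expansive. Mere monotonicity of $-\bA\xstar(\cdot)$ is not enough: expanding
\begin{align*}
\|T(\balpha)-T(\balpha')\|^2
&= \|\balpha-\balpha'\|^2+\gamma^2\|\bA\xstar(\balpha)-\bA\xstar(\balpha')\|^2\\
&\;\;\;+2\gamma\left\langle\bA\xstar(\balpha)-\bA\xstar(\balpha'),\balpha-\balpha'\right\rangle,
\end{align*}
a cross term that is merely $\le 0$ cannot absorb the strictly positive quadratic term, no matter how small $\gamma>0$ is chosen. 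What is actually needed, and what the paper proves in Lemma \ref{lemma:xstar-prop}(b) by combining the two VI inequalities at $\xstar(\balpha)$ and $\xstar(\balpha')$ with the $\mu$-strong monotonicity of $F$, is the quantitative co-coercivity $\langle\bA\xstar(\balpha)-\bA\xstar(\balpha'),\balpha-\balpha'\rangle\le-\mu\|\xstar(\balpha)-\xstar(\balpha')\|^2$. Together with $\|\bA\xstar(\balpha)-\bA\xstar(\balpha')\|\le\|\bA\|\,\|\xstar(\balpha)-\xstar(\balpha')\|$ and $\gamma\le\mu/\|\bA\|^2$ this yields non-expansiveness; without the $-\mu\|\cdot\|^2$ strengthening your argument does not close. (Incidentally, your Lipschitz constant $\|\bA\|/\mu$ for $\xstar(\cdot)$ is correct, but it too follows from this co-coercivity inequality rather than from the weak monotonicity you state.)

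Two further points. First, you never establish that $\balpha_t$ remains bounded with probability 1: compactness of $\XX$ bounds only the increments $\epsilon_t(\bA\xx_t-\bellstar)$, and since $\sum_t\epsilon_t=\infty$ this does not bound the iterates. The paper devotes Lemma \ref{lemma:bounded} to a coupled moment recursion for $\|\xx_t-\xstar(\balpha_t)\|^2+\|\balpha_t-\balphastar\|^2$ precisely to obtain this prerequisite. Second, for the slow time-scale the paper does not invoke a discrete inexact-KM theorem, which would require the perturbation $\bA(\xx_t-\xstar(\balpha_t))$ to be almost surely summable against $\epsilon_t$ --- exactly the delicate point you flag but do not resolve. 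Instead it passes to the ODE limit $\dot{\balpha}(t)=g(\balpha(t))-\balpha(t)$ and applies the Borkar--Soumyanatha convergence result for non-expansive $g$ with non-empty fixed-point set, which only requires the perturbation to vanish asymptotically. Your Robbins--Siegmund treatment of the fast time-scale is a reasonable alternative to the paper's use of the two time-scale tracking lemma, but the proof as proposed cannot be completed without supplying the co-coercivity lemma and the boundedness argument.
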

The next theorem gives the convergence rate for our algorithm.      The bound holds for all $t>0$ if $T_1,T_2>T_0$. 
\begin{theorem}\label{thm:finite}
    Under Assumptions 1-5, there exists a $T_0>0$ and $C>0$ such that 
    $$\EE\left[\|\bA\xx_t-\bellstar\|^2\right]\leq C\left(\eta_t+\frac{1}{t\epsilon_t}\right),$$
    for all $t>T_0$.
\end{theorem}
\begin{remark}
    A smaller value of $\epsilon$ gives a better rate, but it is constrained by the assumptions that $\eta_t$ is square-summable ($\eta>0.5$) and $\lim_{t\rightarrow\infty}\frac{\epsilon_t^2}{\eta_t^3}=0$ ($2\epsilon>3\eta$). Let $\delta>0$. Then for $\eta=0.5+\frac{\delta}{3}$ and $\epsilon=0.75+\delta$ we obtain
     $$\EE\left[\|\bA\xx_t-\bellstar\|^2\right]\leq O\left(t^{-0.25+\delta}\right).$$
     where $\delta>0$ can be chosen to be arbitrarily small.

\end{remark}

\section{Convergence Analysis}\label{sec:sketch}

In this section, we detail our proof strategy for Theorem \ref{thm:converge} and \ref{thm:finite}. Throughout the proofs, we denote the gradient vector given the controlled input $\balpha$ by $F(\xx,\balpha)=F(\xx)-\AT\balpha$.

Our first lemma shows that the desired set $\Nopt$ is non-empty. The proof follows from the strong monotonicity of the game and Slater's conditions (Assumption \ref{assu-existence}). 

\begin{lemma}\label{lemma:existence}
    Under Assumptions \ref{assu-monotone} and \ref{assu-existence}, there exists $\balpha$ such that $\bA\xstar(\balpha)=\bellstar$, i.e., the set $\Nopt$ is non-empty.
\end{lemma}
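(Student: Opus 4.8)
The plan is to exhibit an explicit $\balpha$ for which the NE corresponding to $\balpha$ satisfies $\bA\xstar(\balpha) = \bellstar$. The natural strategy is to set up an auxiliary variational inequality / optimization problem whose solution is forced to satisfy the linear constraint, and then recover the controlled coefficients from its KKT multipliers. Concretely, consider the problem of finding $\xx \in \XX$ with $\bA\xx = \bellstar$ that solves the variational inequality $\langle \yy - \xx, F(\xx)\rangle \le 0$ for all feasible $\yy$ in the restricted set $\XX \cap \{\xx : \bA\xx = \bellstar\}$. Since $F$ is strongly monotone (Assumption \ref{assu-monotone}), this restricted VI has a unique solution $\hat{\xx}$; existence of a feasible point with $\bA\hat\xx = \bellstar$ in the constraint set follows from Slater's condition in Assumption \ref{assu-existence} (which in particular guarantees the set $\XX \cap \{\bA\xx = \bellstar\}$ is nonempty).

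Next I would invoke the KKT conditions for this restricted VI. Writing $\XX = \{\xx : h_i(\xx) \le 0\}$, the solution $\hat\xx$ satisfies: there exist multipliers $\lambda_i \ge 0$ for the inequality constraints $h_i$ and $\balpha \in \RR^K$ for the equality constraint $\bA\xx = \bellstar$ such that
\begin{equation*}
F(\hat\xx) - \AT\balpha - \sum_i \lambda_i \nabla h_i(\hat\xx) = 0, \quad \lambda_i h_i(\hat\xx) = 0.
\end{equation*}
Slater's condition (Assumption \ref{assu-existence}) is exactly what guarantees that these multipliers exist (constraint qualification). The key observation is then that these same conditions say precisely that $\hat\xx$ is the NE of the game with controlled coefficients $\bbeta = \AT\balpha$: the modified gradient operator is $F(\xx) - \AT\balpha$, and the stationarity plus complementarity conditions above are exactly the first-order optimality (VI) characterization of the NE $\xstar(\AT\balpha)$ over $\XX$. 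By uniqueness of the NE in a strongly monotone game, $\xstar(\balpha) = \hat\xx$, hence $\bA\xstar(\balpha) = \bA\hat\xx = \bellstar$, so $\balpha \in \Nopt$ and the set is non-empty.

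The main obstacle is making the constraint-qualification step rigorous: one must verify that Slater's condition for the combined system (convex inequality constraints $h_i \le 0$ plus the linear equality $\bA\xx = \bellstar$) indeed yields existence of KKT multipliers for the \emph{variational inequality}, not just for a convex program. This is standard — one can either cite the VI-KKT theory (e.g., the Facchinei--Pang framework, or equivalently reformulate the restricted VI as a convex feasibility/optimization problem since we only need existence of multipliers at the solution point) — but the bookkeeping of which constraints are active and verifying that $\nabla h_i(\hat\xx)$ together with the rows of $\bA$ form a set admitting nonnegative/free multipliers for $F(\hat\xx)$ requires care. A secondary minor point is confirming that the NE is characterized by the stated VI over $\XX$ (a routine consequence of concavity of each $u_n$ in $\xx_n$, which follows from strong monotonicity of $F$), so that matching the two sets of KKT conditions is legitimate.
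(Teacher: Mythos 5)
Your proposal is correct and follows essentially the same route as the paper: both form the restricted problem on $\XX\cap\{\bA\xx=\bellstar\}$, use Slater's condition to extract KKT multipliers (in particular the multiplier $\balpha$ for the equality constraint), and then match the resulting stationarity conditions with the NE characterization of $\mathcal{G}(\balpha)$, concluding via uniqueness of the NE in the strongly monotone game. The only difference is packaging --- the paper phrases the restricted problem as a coupled-constraint game and cites Rosen's Theorem 1 with its KKT equations, whereas you phrase it as a variational inequality and appeal to the Facchinei--Pang VI-KKT theory --- which is an equivalent formulation of the same argument.
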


\begin{proof}
Let $\mathcal{G}(\balpha)$ be our game with utility functions $u_n(\xx,\bbeta)$ and set of action profiles $\XX$ where $\bbeta=\AT\balpha$. Consider a modified game $\mathcal{G}'$ with utility functions $u'_n(\xx)=r_n(\xx)$ and set of action profiles $\{\xx\mid\xx\in\XX, \bA\xx=\bellstar\}$. Under the assumption that there exists $\xx$ such that $\bA\xx=\bellstar$ and $h_i(\xx)<0$ for all $i=1,\ldots,M$, Theorem 1 from \cite{rosen1965existence} states that there exists a NE $\hat{\xx}\in\XX$ of $\mathcal{G}'$. Further, the Karush-Kuhn-Tucker conditions (equations (3.3) and (3.6) from \cite{rosen1965existence}) state that there exist $\boldsymbol{\mu}\in\RR^M, \boldsymbol{\lambda}\in\RR^K$ such that 
$$F(\hat{\xx})+\sum_{i=1}^M\nabla h_i(\hat{\xx})\mu_i+\AT\boldsymbol{\lambda}=0 $$
and $$\bA\hat{\xx}=\bellstar \;\textrm{and}\; h_i(\hat{\xx})\leq0, \forall i\in\{1,2,\ldots,M\}.$$
Let $\balpha=-\boldsymbol{\lambda}$. Then the above equations can be written as
$$F(\hat{\xx}, \balpha)+\sum_{i=1}^M\nabla h_i(\hat{\xx})\mu_i=0 $$
and $$h_i(\hat{\xx})\leq0, \forall i\in\{1,2,\ldots,M\},$$
so $\hat{\xx}$ is a NE for the game $\mathcal{G}(\balpha)$, and hence $\xstar(\balpha)=\hat{\xx}$. We also know that $\bA\hat{\xx}=\bellstar$. Hence there exists $\balpha$ such that $\bA\xstar(\balpha)=\bellstar$, showing that $\Nopt$ is non-empty.


\end{proof}

A main technical challenge in our analysis is the convex and compact action sets. Indeed, with no constraints on the actions, and under certain assumptions on the matrix $\bA$, we can show that $\bA\xstar(\balpha)$ is strongly monotone instead of co-coercive (Lemma \ref{lemma:xstar-prop} (b)). This in turn shows that the mapping $g(\balpha)$ in Definition \ref{defn:g} is contractive. In this simplified case, both time-scales are fixed point iterations with contractive mappings and results from \cite{Chandak-TTS-Opti} can be applied. Therefore, the action sets call for a novel two time scale convergence analysis. 




\subsection{Non-expansive Mapping}\label{subsec:nonexpsketch}
In this subsection, we prove the non-expansiveness of the following mapping that is central to our analysis. 

\begin{definition}\label{defn:g}
    Define the map $g(\cdot):\RR^K\mapsto\RR^K$ as $$g(\balpha)=\balpha+\gamma(\bA\xstar(\balpha)-\bellstar),$$ where 
$\gamma=\min\{\mu/\|\bA\|^2,1\}$ is a constant chosen to satisfy $2\mu\gamma-\gamma^2\|\bA\|^2>0$. Note that a fixed point of $g(\balpha)$ will satisfy our objective of $\bA\xstar(\balpha)=\bellstar$. 
\end{definition}

To establish the non-expansiveness of $g$, we first need to show two important properties of the NE corresponding to each $\balpha\in\RR^K$. The first property is that $\xstar(\balpha)$ is Lipschitz continuous in $\balpha$ and the second property is that the weighted load vector $\bA\xstar(\balpha)$ is co-coercive in $\balpha$. The proofs for these two results follow from the strongly monotone nature of the game by using properties of variational inequalities (VI). 

\begin{lemma}\label{lemma:xstar-prop}
    Let $\xstar(\balpha)$ be the NE given $\balpha\in\RR^K$. Then, 
    \begin{enumerate}[label=(\alph*)]
        \item There exists a constant $L_0$ such that for all $\balpha_1,\balpha_2\in\RR^K$, 
        \begin{equation}\label{xstar-lip}
            \|\xstar(\balpha_2)-\xstar(\balpha_1)\|\leq L_0\|\balpha_2-\balpha_1\|.
        \end{equation}
        \item For all $\balpha_1,\balpha_2\in\RR^K$, 
        \begin{align}\label{xstar-coco}
            &\langle \bA\xstar(\balpha_2)-\bA\xstar(\balpha_1), \balpha_2-\balpha_1\rangle\nonumber\\
            &\;\;\;\;\;\;\;\;\;\;\;\;\;\;\;\;\;\;\;\;\;\leq-\mu\|\xstar(\balpha_2)-\xstar(\balpha_1)\|^2,
        \end{align}
        where $\mu$ is the monotonicity parameter of the game.
    \end{enumerate}
\end{lemma}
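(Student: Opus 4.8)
The plan is to treat both parts as consequences of the standard variational-inequality (VI) characterization of the Nash equilibrium of a strongly monotone game. First I would recall that, since the game $\mathcal{G}(\balpha)$ is strongly monotone with gradient operator $F(\xx)-\AT\balpha$ and the action set $\XX$ is convex and compact, $\xstar(\balpha)$ is the unique solution of the VI
\[
\langle \xx - \xstar(\balpha),\, F(\xstar(\balpha)) - \AT\balpha\rangle \le 0 \quad \text{for all } \xx\in\XX.
\]
(Here I use that the per-player first-order optimality conditions concatenate into exactly this VI, because $\XX=\XX_1\times\cdots\times\XX_N$ is a product set; this is the standard equivalence between NE of a monotone game and a VI, e.g.\ \cite{VI-FP}.) The key algebraic move, used for both parts, is to write this inequality at $\balpha=\balpha_1$ with test point $\xx=\xstar(\balpha_2)$, and at $\balpha=\balpha_2$ with test point $\xx=\xstar(\balpha_1)$, and then add the two. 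Writing $\xx_1=\xstar(\balpha_1)$, $\xx_2=\xstar(\balpha_2)$, the sum gives
\[
\langle \xx_2-\xx_1,\, F(\xx_1)-F(\xx_2)\rangle + \langle \xx_2-\xx_1,\, \AT(\balpha_2-\balpha_1)\rangle \le 0.
\]
By Assumption~\ref{assu-monotone} the first inner product is $\ge \mu\|\xx_2-\xx_1\|^2$, so rearranging yields
\[
\langle \xx_2-\xx_1,\, \AT(\balpha_1-\balpha_2)\rangle \ge \mu\|\xx_2-\xx_1\|^2,
\]
i.e.\ $\langle \bA\xx_2-\bA\xx_1,\, \balpha_2-\balpha_1\rangle \le -\mu\|\xx_2-\xx_1\|^2$ after moving the adjoint across and flipping the sign. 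This is exactly \eqref{xstar-coco}, so part (b) is done.

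For part (a) I would start from the same intermediate inequality. By Cauchy--Schwarz and the operator-norm definition, the second term in the displayed sum is at least $-\|\bA\|\,\|\xx_2-\xx_1\|\,\|\balpha_2-\balpha_1\|$, while the first term is $\ge \mu\|\xx_2-\xx_1\|^2$. Combining, $\mu\|\xx_2-\xx_1\|^2 \le \|\bA\|\,\|\xx_2-\xx_1\|\,\|\balpha_2-\balpha_1\|$, and dividing through by $\|\xx_2-\xx_1\|$ (the case $\xx_1=\xx_2$ being trivial) gives \eqref{xstar-lip} with $L_0=\|\bA\|/\mu$. Note that Assumption~2 (Lipschitzness of $F$) is not actually needed for either bound — the strong monotonicity alone carries the argument — though one could equivalently invoke the general VI sensitivity/co-coercivity results.

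The main obstacle, such as it is, is not in the inequality manipulation but in justifying the VI characterization cleanly: one must be careful that $\xstar(\balpha)$ is well-defined (existence and uniqueness, which follows from strong monotonicity plus compact convex $\XX$), and that the NE condition from the Definition is equivalent to the stated VI on the product set — this uses concavity of each $u_n$ in $\xx_n$, which in turn is implied by Assumption~\ref{assu-monotone} (strong monotonicity of $F$ forces each $r_n$ to be strongly concave in $\xx_n$). Once that equivalence is in hand, everything reduces to the two-VI-addition trick above. I would therefore open the proof by stating the VI reformulation, and keep the rest terse.
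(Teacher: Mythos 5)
Your proof is correct. For part (b) you follow essentially the same route as the paper: write the VI characterization of the NE at $\balpha_1$ and at $\balpha_2$, add the two inequalities, and invoke strong monotonicity; your intermediate inequality and sign bookkeeping both check out. For part (a), however, you take a genuinely different and more economical path. The paper proves Lipschitz continuity of $\xstar(\cdot)$ by invoking the natural-map error bound $\|\xx-\xstar(\balpha)\|\leq L'\|\xx-\Pi_\XX(\xx+F(\xx,\balpha))\|$ (Theorem 2.3.3 of the Facchinei--Pang reference), which requires both strong monotonicity and the Lipschitz continuity of $F$ from Assumption 2, and yields an unspecified constant $L_0=L'\|\AT\|$. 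You instead reuse the summed VI inequality from part (b) together with Cauchy--Schwarz to obtain $\mu\|\xstar(\balpha_2)-\xstar(\balpha_1)\|^2\leq\|\bA\|\,\|\xstar(\balpha_2)-\xstar(\balpha_1)\|\,\|\balpha_2-\balpha_1\|$ and then divide; this needs only strong monotonicity and produces the explicit constant $L_0=\|\bA\|/\mu$. Your remark that Assumption 2 is not actually needed for either bound is accurate, and the explicit constant is a small bonus --- the rest of the paper only uses the existence of some finite $L_0$, so nothing downstream changes. Your attention to the NE-to-VI equivalence on the product set $\XX=\XX_1\times\cdots\times\XX_N$ is also well placed and consistent with the paper's own reliance on Proposition 1.4.2 of the same reference.
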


\begin{proof}
For a fixed $\balpha$, $-F(\xx,\balpha)$ is strongly monotone on $\XX$ and is Lipschitz continuous by Assumptions 1 and 2 respectively. Then we state two properties of the NE $\xstar(\balpha)$. First, \cite[Proposition 1.4.2]{VI-FP} states that the NE $\xstar(\balpha)$ corresponding to $\balpha$ satisfies the variational inequality 
\begin{equation}\label{VI-prop-1}
    \left\langle \xx-\xstar(\balpha),F\left(\xstar(\balpha),\balpha\right)\right\rangle \leq0
\end{equation} for all $\xx\in\mathcal{X}$. Second, for a strongly monotone function, \cite[Theorem 2.3.3]{VI-FP} states that for all $\xx\in\XX$, for some $L'$   
\begin{equation}\label{VI-prop-2}
    \|\xx-\xstar(\balpha)\|\leq L'\|\xx-\Pi_\XX(\xx+F(\xx,\balpha))\|.
\end{equation}
Hence for $\balpha_1,\balpha_2\in\RR^K$, we have
\begin{align*}
    &\|\xstar(\balpha_2)-\xstar(\balpha_1)\|\\
    &\stackrel{(a)}{\leq} L'\|\xstar(\balpha_2)-\Pi_\XX(\xstar(\balpha_2)+F(\xstar(\balpha_2),\balpha_1))\|\\
    &\stackrel{(b)}{=}L'\Big\|\Pi_\XX(\xstar(\balpha_2)+F(\xstar(\balpha_2),\balpha_2))\\
    &\;\;\;\;\;\;\;\;\;-\Pi_\XX(\xstar(\balpha_2)+F(\xstar(\balpha_2),\balpha_1))\Big\|\\
    &\leq L'\|F(\xstar(\balpha_2),\balpha_2)-F(\xstar(\balpha_2),\balpha_1)\|\\
    &=L'\|\AT(\balpha_2-\balpha_1)\|\\
    &\leq L_0\|\balpha_2-\balpha_1\|,
\end{align*}
where $L_0=L'\|\AT\|$. Here inequality (a) follows from \eqref{VI-prop-2} and equality (b) follows from \cite[Proposition 1.5.8]{VI-FP} which states that the solution of the VI \eqref{VI-prop-1} satisfies
$$\xstar(\balpha_2)=\Pi_\XX(\xstar(\balpha_2)+F(\xstar(\balpha_2),\balpha_2)).$$
This completes the proof for part (a) of Lemma \ref{lemma:xstar-prop}.

Now let $\balpha_1,\balpha_2\in\RR^K$. Then for $x=\xstar(\balpha_2)$ in \eqref{VI-prop-1},
\begin{equation*}
\left\langle \xstar(\balpha_2)-\xstar(\balpha_1),F\left(\xstar(\balpha_1),\balpha_1\right)\right\rangle \leq0.
\end{equation*}
Similarly, we also have 
\begin{equation*}
\left\langle \xstar(\balpha_1)-\xstar(\balpha_2),F\left(\xstar(\balpha_2),\balpha_2\right)\right\rangle \leq0.
\end{equation*}
By combining the above two inequalities we obtain
\begin{equation}\label{random-eqn}
\left\langle \xstar(\balpha_2)-\xstar(\balpha_1),F\left(\xstar(\balpha_2),\balpha_2\right)-F\left(\xstar(\balpha_1),\balpha_1\right)\right\rangle \geq0.
\end{equation}

Then 
\begin{align*}
    &-\mu\|\xstar(\balpha_2)-\xstar(\balpha_1)\|^2\\
    &\stackrel{(a)}{\geq} \langle   \xstar(\balpha_2)-\xstar(\balpha_1),F\left(\xstar(\balpha_2),\balpha_1\right)-F\left(\xstar(\balpha_1),\balpha_1\right)\rangle\\
    &= \langle   \xstar(\balpha_2)-\xstar(\balpha_1),F\left(\xstar(\balpha_2),\balpha_2\right)-F\left(\xstar(\balpha_1),\balpha_1\right)\rangle\\
    &\;\;\;+\langle   \xstar(\balpha_2)-\xstar(\balpha_1),\AT\balpha_2-\AT\balpha_1\rangle\\
    &\stackrel{(b)}{\geq} \langle   \xstar(\balpha_2)-\xstar(\balpha_1),\AT\balpha_2-\AT\balpha_1\rangle\\
    &=(\xstar(\balpha_2)-\xstar(\balpha_1))^\intercal\AT(\balpha_2-\balpha_1)\\
    &=(\bA\xstar(\balpha_2)-\bA\xstar(\balpha_1))^\intercal(\balpha_2-\balpha_1)\\
    &=\left\langle\bA\xstar(\balpha_2)-\bA\xstar(\balpha_1),\balpha_2-\balpha_1\right\rangle.
\end{align*}
Here inequality (a) uses that $-F(\xx,\balpha)$ is strongly monotone in $\xx$ with parameter $\mu$ and (b) uses \eqref{random-eqn}. 
\end{proof}

The next lemma shows that the map $g(\cdot)$ is non-expansive.
\begin{lemma}\label{lemma:g-prop}
    For any $\balpha_1,\balpha_2\in\RR^K$, the function $g(\cdot)$ satisfies, 
        \begin{equation}\label{non-exp}
            \|g(\balpha_2)-g(\balpha_1)\|\leq \|\balpha_2-\balpha_1\|,
        \end{equation}
        i.e., the map $g(\cdot)$ is non-expansive.
\end{lemma}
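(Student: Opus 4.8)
The plan is to expand the squared norm $\|g(\balpha_2)-g(\balpha_1)\|^2$ directly and then control the two non-trivial terms using Lemma \ref{lemma:xstar-prop}(b) together with the operator-norm inequality $\|\bA\xx\|\le\|\bA\|\,\|\xx\|$.

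First I would use the definition of $g$ to write
$$g(\balpha_2)-g(\balpha_1)=(\balpha_2-\balpha_1)+\gamma\big(\bA\xstar(\balpha_2)-\bA\xstar(\balpha_1)\big),$$
and then, abbreviating $\Delta_\balpha=\balpha_2-\balpha_1$ and $\Delta_x=\xstar(\balpha_2)-\xstar(\balpha_1)$, expand
$$\|g(\balpha_2)-g(\balpha_1)\|^2=\|\Delta_\balpha\|^2+2\gamma\langle \bA\Delta_x,\Delta_\balpha\rangle+\gamma^2\|\bA\Delta_x\|^2.$$
Next I would bound the cross term using Lemma \ref{lemma:xstar-prop}(b), which gives $\langle \bA\Delta_x,\Delta_\balpha\rangle\le -\mu\|\Delta_x\|^2$, and bound the last term by $\|\bA\Delta_x\|^2\le \Anormsq\|\Delta_x\|^2$. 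Substituting both estimates yields
$$\|g(\balpha_2)-g(\balpha_1)\|^2\le \|\Delta_\balpha\|^2-\big(2\mu\gamma-\gamma^2\Anormsq\big)\|\Delta_x\|^2.$$

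Finally, since $\gamma=\min\{\mu/\Anormsq,1\}$ is chosen precisely so that $2\mu\gamma-\gamma^2\Anormsq>0$ — which I would verify by checking the two cases $\gamma=\mu/\Anormsq$ (giving $\mu^2/\Anormsq>0$) and $\gamma=1$, i.e.\ $\mu\ge\Anormsq$ (giving $2\mu-\Anormsq\ge \Anormsq>0$) — the subtracted term is non-negative, so $\|g(\balpha_2)-g(\balpha_1)\|^2\le\|\balpha_2-\balpha_1\|^2$, which is the claimed non-expansiveness.

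There is no genuine obstacle here: this is a short computation once Lemma \ref{lemma:xstar-prop}(b) is available. The only point meriting care is the verification of the sign condition $2\mu\gamma-\gamma^2\Anormsq>0$ for the stated $\gamma$, and the observation that the argument delivers only non-expansiveness (not strict contraction) because it rests on the co-coercivity-type bound for $\bA\xstar(\cdot)$ rather than strong monotonicity; as noted in Section \ref{sec:sketch}, strong monotonicity — and hence a contraction — would require additional assumptions on $\bA$ and unconstrained action sets.
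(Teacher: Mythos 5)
Your proof is correct and follows essentially the same route as the paper's: expand the squared norm, bound the cross term with Lemma \ref{lemma:xstar-prop}(b), bound the quadratic term via $\|\bA\Delta_x\|\le\|\bA\|\,\|\Delta_x\|$, and conclude from $2\mu\gamma-\gamma^2\|\bA\|^2>0$. Your explicit two-case verification of that sign condition is a small addition the paper leaves implicit, but otherwise the arguments coincide.
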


\begin{proof}
For any $\balpha_2,\balpha_1\in\RR^K$, 
    \begin{align*}
        &\|g(\balpha_2)-g(\balpha_1)\|^2\\
        &=\|\balpha_2+\gamma\bA\xstar(\balpha_2)-\balpha_1-\gamma\bA\xstar(\balpha_1)\|^2\\
        &=\|\balpha_2-\balpha_1\|^2+\gamma^2\left\|\bA\xstar(\balpha_2)-\bA\xstar(\balpha_1)\right\|^2\\
        &\;\;\;+2\gamma\left\langle \bA\xstar(\balpha_2)-\bA\xstar(\balpha_1),\balpha_2-\balpha_1\right\rangle \\
        &\stackrel{(a)}\leq \|\balpha_2-\balpha_1\|^2+\gamma^2\|\bA\|^2\|\xstar(\balpha_2)-\xstar(\balpha_1)\|^2\\
        &\;\;\;-2\mu\gamma\|\xstar(\balpha_2)-\xstar(\balpha_1)\|^2\\
        &= \|\balpha_2-\balpha_1\|^2-(2\mu\gamma-\gamma^2\|\bA\|^2)\|\xstar(\balpha_2)-\xstar(\balpha_1)\|^2\\
        &\stackrel{(b)}{\leq} \|\balpha_2-\balpha_1\|^2.
    \end{align*}
    Here inequality (a) follows from part (b) of Lemma \ref{lemma:xstar-prop} and inequality (b) follows from the assumption on $\gamma$ that $2\mu\gamma>\gamma^2\|\bA\|^2$. Hence $g(\cdot)$ is non-expansive. 
\end{proof}

The next lemma shows that the iterate $\balpha_t$ has bounded expectation for all $t$. This lemma is required both for proving the asymptotic convergence result in Theorem \ref{thm:converge}, which requires almost-sure boundedness of iterates and to obtain a better convergence rate in Theorem \ref{thm:finite}.
\begin{lemma}\label{lemma:bounded}
    Under Assumptions 1-5,  
    $$\EE[\|\balpha_t-\balphastar\|^2]\leq C_0,$$
    for some $C_0>0$, $\balphastar\in\Nopt$ and all $t>0$.
    \end{lemma}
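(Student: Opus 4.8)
## Proof Proposal for Lemma \ref{lemma:bounded}

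The plan is to analyze the $\balpha$-iteration in \eqref{alpha-iter} as a perturbed version of the Krasnosel'ski\u{i}--Mann iteration for the non-expansive map $g$ from Definition \ref{defn:g}, and to exploit the non-expansiveness from Lemma \ref{lemma:g-prop} together with a Robbins--Siegmund / recursive-inequality argument. Fix $\balphastar \in \Nopt$, so that $g(\balphastar) = \balphastar$. First I would rewrite the update as
\begin{align*}
\balpha_t - \balphastar &= \balpha_{t-1} - \balphastar + \epsilon_{t-1}\big(\bA\xx_{t-1} - \bellstar\big)\\
&= (1 - \epsilon_{t-1}/\gamma)(\balpha_{t-1} - \balphastar) + (\epsilon_{t-1}/\gamma)\big(g(\balpha_{t-1}) - \balphastar\big)\\
&\quad + \epsilon_{t-1}\big(\bA\xx_{t-1} - \bA\xstar(\balpha_{t-1})\big),
\end{align*}
using $g(\balpha_{t-1}) - \balphastar = \balpha_{t-1} - \balphastar + \gamma(\bA\xstar(\balpha_{t-1}) - \bellstar)$. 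The first two terms form a convex combination (for $t$ large enough that $\epsilon_{t-1} \le \gamma$, which holds eventually since $\epsilon_t \to 0$), so by non-expansiveness $\|(1-\epsilon_{t-1}/\gamma)(\balpha_{t-1}-\balphastar) + (\epsilon_{t-1}/\gamma)(g(\balpha_{t-1})-\balphastar)\| \le \|\balpha_{t-1} - \balphastar\|$. Hence
$$\|\balpha_t - \balphastar\| \le \|\balpha_{t-1} - \balphastar\| + \epsilon_{t-1}\|\bA\|\,\|\xx_{t-1} - \xstar(\balpha_{t-1})\|.$$

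The crux is then to control the tracking error $e_{t} := \|\xx_{t} - \xstar(\balpha_{t})\|$, which measures how far the fast time-scale lags behind its moving target. I would square the displayed bound, take conditional expectation, and combine it with a companion recursion for $\EE[e_t^2]$. For the fast iteration, standard strongly-monotone-projection arguments (using Assumptions 1--3, the $L_0$-Lipschitz property of $\xstar(\cdot)$ from Lemma \ref{lemma:xstar-prop}(a), and the martingale noise bound \eqref{Mart-bound}) give a contraction of the form $\EE[e_t^2 \mid \FF_{t-1}] \le (1 - c\,\eta_{t-1})e_{t-1}^2 + C_1\eta_{t-1}^2 + C_2 \|\balpha_t - \balpha_{t-1}\|^2/\eta_{t-1}$, where the last term (of order $\epsilon_{t-1}^2/\eta_{t-1}$ after bounding $\|\bA\xx_{t-1}-\bellstar\|$ on the compact set $\XX$) accounts for the drift of the target between steps. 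Multiplying through by the right weights and summing, Assumption \ref{assu-stepsize}(b) ($\epsilon_t^2/\eta_t^3 \to 0$, hence $\sum \epsilon_{t-1}^2/\eta_{t-1} < \infty$ after pairing with square-summability) ensures $\sum_t \epsilon_{t-1}\EE[e_{t-1}^2]^{1/2}$-type quantities are summable; more cleanly, one shows $\EE[e_t^2] \le C_3(\eta_t + \epsilon_{t-1}^2/\eta_{t-1}^2)$ or at least that $\sum_t \epsilon_{t-1}^2 \EE[e_{t-1}^2] < \infty$ and $\sum_t \epsilon_{t-1}\sqrt{\EE[e_{t-1}^2]} < \infty$. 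Feeding this back, $\EE[\|\balpha_t - \balphastar\|]$ is bounded by a convergent telescoping series, and a similar treatment with the squared recursion (using $\langle \balpha_{t-1}-\balphastar,\, \bA\xx_{t-1}-\bA\xstar(\balpha_{t-1})\rangle$ bounded via Cauchy--Schwarz and Young's inequality) yields the uniform bound $\EE[\|\balpha_t - \balphastar\|^2] \le C_0$.

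The main obstacle I anticipate is making the coupled two time-scale bookkeeping rigorous: the fast-iterate error recursion depends on $\|\balpha_t - \balpha_{t-1}\|$, which in turn depends on $\|\bA\xx_{t-1}-\bellstar\| \le \|\bA\|(\|\xx_{t-1}-\xstar(\balpha_{t-1})\| + \|\xstar(\balpha_{t-1})-\xstar(\balphastar)\|) + 0$, and the second summand is $L_0\|\bA\|\|\balpha_{t-1}-\balphastar\|$ — so the recursion for $e_t$ secretly involves $\|\balpha_{t-1}-\balphastar\|$, which is exactly what we are trying to bound. Breaking this circularity requires either a joint Lyapunov function $V_t = \EE[\|\balpha_t-\balphastar\|^2] + \kappa\,\EE[e_t^2]$ with $\kappa$ chosen so the cross terms cancel, or an induction on $t$ showing $V_t$ stays bounded; the timescale separation condition \ref{assu-stepsize}(b) is precisely what makes the $e_t$-contribution negligible enough for this to close. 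A secondary technical point is that $\xstar$ is only defined for $\balpha \in \RR^K$ with no a priori bound, so all Lipschitz/co-coercivity constants must be genuinely global (which Lemma \ref{lemma:xstar-prop} provides) — this is fine, but worth stating explicitly.
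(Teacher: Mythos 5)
Your proposal follows essentially the same route as the paper: a contraction recursion for $\EE[\|\xx_t-\xstar(\balpha_t)\|^2]$ on the fast time-scale, a Krasnosel'ski\u{i}--Mann-style recursion for $\EE[\|\balpha_t-\balphastar\|^2]$ exploiting the non-expansiveness of $g$, and the joint Lyapunov quantity $\EE[\|\xx_t-\xstar(\balpha_t)\|^2]+\EE[\|\balpha_t-\balphastar\|^2]$ (the paper takes your $\kappa=1$) whose growth factor $1+O(\eta_t^2)$ is tamed by square-summability of $\eta_t$ and $\epsilon_t^2/\eta_t^3\to0$. One minor correction to your diagnosis: the circularity does not enter through $\|\bA\xx_{t-1}-\bellstar\|$, which is bounded outright by compactness of $\XX$, but through the $\AT\balpha_t$ term in the players' gradient step, which forces the bound $\|F(\xx_t,\balpha_t)\|^2\leq c_1^2+c_2^2\|\balpha_t-\balphastar\|^2$ into the fast recursion --- your joint Lyapunov function closes this loop all the same.
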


\begin{proof}
    Let $\balphastar\in\Nopt$. We first give a recursive bound on the term $\EE[\|\xx_{t+1}-\xstar(\balpha_{t+1})\|^2]$. For this, note that
\begin{subequations}\label{split1}
\begin{align}
    &\|\xx_{t+1}-\xstar(\balpha_{t+1})\|^2\nonumber\\
    &=\|\Pi_\XX(\xx_t+\eta_t(F(\xx_t)-\ATalpha_t+\MM_{t+1}))-\xstar(\balpha_{t+1})\|^2\nonumber\\
    &\stackrel{(a)}{=}\|\Pi_\XX(\xx_t+\eta_t(F(\xx_t)-\ATalpha_t+\MM_{t+1}))\nonumber\\
    &\;\;\;\;\;\;\;\;\;\;\;\;\;\;\;\;\;\;\;\;\;\;\;\;\;\;\;\;\;\;\;\;\;\;\;\;\;\;\;\;\;\;\;\;\;\;\;\;\;-\Pi_\XX(\xstar(\balpha_{t+1}))\|^2\nonumber\\
    &\stackrel{(b)}{\leq}\|\xx_t+\eta_t(F(\xx_t)-\ATalpha_t+\MM_{t+1})-\xstar(\balpha_{t+1})\|^2\nonumber\\
    &=\|\xx_t-\xstar(\balpha_t)+\eta_t(F(\xx_t)-\ATalpha_t+\MM_{t+1})\nonumber\\
    &\;\;+\xstar(\balpha_t)-\xstar(\balpha_{t+1})\|^2\nonumber\\
    &=\|\xx_t-\xstar(\balpha_t)+\eta_tF(\xx_t,\balpha_t)\|^2\label{split11}\\
    &\;\;+ \|\eta_t\MM_{t+1}+\xstar(\balpha_t)-\xstar(\balpha_{t+1})\|^2\label{split12}\\
    &\;\;+ 2\big\langle\xx_t-\xstar(\balpha_t)+\eta_tF(\xx_t,\balpha_t),\nonumber \\
    &\;\;\;\;\;\;\;\;\;\;\;\;\;\;\;\;\;\;\eta_t\MM_{t+1}+\xstar(\balpha_t)-\xstar(\balpha_{t+1})\big\rangle.\label{split13}
\end{align}    
\end{subequations}
Here equality (a) uses that $\xstar(\balpha_{t+1})\in\XX$ and inequality (b) follows since the projection operator $\Pi_\XX(\cdot)$ for any convex set $\XX$ is non-expansive.

We first simplify the term \eqref{split11}:
\begin{align*}
    &\|\xx_t-\xstar(\balpha_t)+\eta_tF(\xx_t,\balpha_t)\|^2\\
    &=\|\xx_t-\xstar(\balpha_t)\|^2+\eta_t^2\|F(\xx_t,\balpha_t)\|^2\\
    &\;\;+ 2\eta_t\langle\xx_t-\xstar(\balpha_t),F(\xx_t,\balpha_t)\rangle.
\end{align*}
The last term is bounded as follows.
\begin{align*}
    &\langle\xx_t-\xstar(\balpha_t), F(\xx_t,\balpha_t)\rangle\\
    &\leq \langle\xx_t-\xstar(\balpha_t), F(\xx_t,\balpha_t)-F(\xstar(\balpha_t),\balpha_t)\rangle\\
    &\leq -\mu\|\xx_t-\xstar(\balpha_t)\|^2.
\end{align*}
Here the first inequality follows since $\langle\xx-\xstar(\balpha),F(\xstar(\balpha),\balpha)\rangle\leq 0$, for all $\balpha\in\RR^K$ and $\xx\in\XX$, since $\xstar(\balpha)$ is a NE (Proposition 1.4.2 in \cite{VI-FP}) and the second inequality follows from the strong monotonicity property of the game. We also have that 
\begin{align*}
    &\|F(\xx_t,\balpha_t)\|^2=\|F(\xx_t)-\AT\balpha_t\|^2\\
    &=\|F(\xx_t)-\AT\balphastar-\AT\balpha_t+\AT\balphastar\|^2\\
    &\leq 2\|F(\xx_t)-\AT\balphastar\|^2+2\|\AT\|^2\|\balpha_t-\balphastar\|^2\\
    &\leq c_1^2+c_2^2\|\balpha_t-\balphastar\|^2.
\end{align*}
where the last inequality follows for some $c_1$ and $c_2$ as $\|F(\xx_t)-\AT\balphastar\|$ is bounded by some constant as $\xx_t\in\XX$, where $\XX$ is compact. Coming back to \eqref{split11}, we have
\begin{align*}
    &\|\xx_t-\xstar(\balpha_t)+\eta_tF(\xx_t,\balpha_t)\|^2\\
    &\leq (1-2\mu\eta_t)\|\xx_t-\xstar(\balpha_t)\|^2+\eta_t^2c_1^2+\eta_t^2c_2^2\|\balpha_t-\balphastar\|^2.
\end{align*}
For \eqref{split12}, note that
\begin{align*}
    &\EE[\|\eta_t\MM_{t+1}+\xstar(\balpha_t)-\xstar(\balpha_{t+1})\|^2\mid\FF_t]\\
    &= \eta_t^2\EE[\|\MM_{t+1}\|^2\mid\FF_t]+\EE[\|\xstar(\balpha_t)-\xstar(\balpha_{t+1})\|^2\mid\FF_t]\\
    &\;\;+ 2\eta_t\EE[\langle\MM_{t+1},\xstar(\balpha_t)-\xstar(\balpha_{t+1})\rangle\mid\FF_t]\\
    &\stackrel{(a)}{=}\eta_t^2\EE[\|\MM_{t+1}\|^2\mid\FF_t]+\|\xstar(\balpha_t)-\xstar(\balpha_{t+1})\|^2\\
    &\stackrel{(b)}{\leq} \eta_t^2M_c + L_0^2\|\balpha_t-\balpha_{t+1}\|^2= \eta_t^2M_c+ L_0^2\epsilon_t^2\left\|\bA\xx_t-\bellstar\right\|^2\\
    &\stackrel{(c)}{\leq} \eta_t^2M_c+ \epsilon_t^2c_3^2.
\end{align*}
Here (a) uses $\EE[\langle\MM_{t+1},\xstar(\balpha_t)-\xstar(\balpha_{t+1})\rangle\mid\FF_t]=0$, since $\alpha_{t+1}$ is $\FF_t$ measurable and $\MM_{t+1}$ is a Martingale difference sequence. Inequality (b) follows from Lemma \ref{lemma:xstar-prop} part (a), and (c) follows from $\xx_t\in\XX$ for some constant $c_3$. 

Finally, for the third term \eqref{split13}:
\begin{align*}
    &2\EE\big[\big\langle\xx_t-\xstar(\balpha_t)+\eta_tF(\xx_t,\balpha_t),\\
    &\;\;\;\;\;\;\;\;\;\;\;\;\;\;\;\;\eta_t\MM_{t+1}+\xstar(\balpha_t)-\xstar(\balpha_{t+1})\big\rangle\mid\FF_t\big]\\
    &= 2\big\langle\xx_t-\xstar(\balpha_t)+\eta_tF(\xx_t,\balpha_t),\xstar(\balpha_t)-\xstar(\balpha_{t+1})\big\rangle\\
    &\stackrel{(a)}{\leq} 2\|\xx_t-\xstar(\balpha_t)+\eta_tF(\xx_t,\balpha_t)\|L_0\|\balpha_t-\balpha_{t+1}\|\\
    &\stackrel{}{\leq}2\|\xx_t-\xstar(\balpha_t)\|\epsilon_tL_0\left\|\bA\xx_t-\bellstar\right\|\\
    &\;\;+2\eta_t\|F(\xx_t,\balpha_t)\|\epsilon_tL_0\|\bA\xx_t-\bellstar\|\\
    &\stackrel{(b)}{\leq}2\|\xx_t-\xstar(\balpha_t)\|\epsilon_tc_3+2\eta_t\epsilon_t\|F(\xx_t,\balpha_t)\|c_3.
\end{align*}
Here the first equality uses that $\MM_{t+1}$ is a martingale difference sequence and $\xx_t$, $\balpha_t$ and $\balpha_{t+1}$ are $\FF_t$ measurable. Inequality (a) follows from the Cauchy-Schwarz inequality and the Lipschitz continuity of the map $\xstar(\cdot)$ (Lemma \ref{lemma:xstar-prop}) and (b) uses boundedness of $\xx_t$ and similar to previous expansions.

To bound the term $2\|\xx_t-\xstar(\balpha_t)\|\epsilon_tc_3$, we use $(2ab\leq a^2+b^2)$ where $a=\sqrt{\mu\eta_t\|\xx_t-\xstar(\balpha_t)\|^2/2}$ and $b=\sqrt{(\epsilon_t^2/\eta_t)(2c_3^2/\mu)}$. To bound the term $2\eta_t\epsilon_t\|F(\xx_t,\balpha_t)\|c_3$, we use the same inequality with $a=\eta_t\|F(\xx_t,\balpha_t)\|$ and $b=\epsilon_tc_3$. This yields
\begin{align*}
    &2\EE\big[\big\langle\xx_t-\xstar(\balpha_t)+\eta_tF(\xx_t,\balpha_t),\\
    &\;\;\;\;\;\;\;\;\;\;\;\;\;\;\;\;\eta_t\MM_{t+1}+\xstar(\balpha_t)-\xstar(\balpha_{t+1})\big\rangle\mid\FF_t\big]\\
    &\leq \frac{\mu}{2}\eta_t\|\xx_t-\xstar(\balpha_t)\|^2+\frac{\epsilon_t^2}{\eta_t}2c_3^2/\mu+\eta_t^2\|F(\xx_t,\balpha_t)\|^2+\epsilon_t^2c_3^2\\
    &\leq \frac{\mu}{2}\eta_t\|\xx_t-\xstar(\balpha_t)\|^2+\frac{\epsilon_t^2}{\eta_t}2c_3^2/\mu\\
    &\;\;\;+\eta_t^2c_1^2+\eta_t^2c_2^2\|\balpha_t-\balphastar\|^2+\epsilon_t^2c_3^2\\
\end{align*}

We have now bounded all three terms from \eqref{split1}. Combining them after taking expectation gives us:
\begin{align*}
    &\EE[\|\xx_{t+1}-\xstar(\balpha_{t+1})\|^2]\nonumber\\
    &\leq \left(1-\frac{3}{2}\mu\eta_t\right)\EE[\|\xx_t-\xstar(\balpha_t)\|^2]+\eta_t^2(2c_1^2+M_c)\\
    &\;\;\;+2\epsilon_t^2c_3^2+\frac{\epsilon_t^2}{\eta_t}2c_3^2/\mu+2\eta_t^2c_2^2\EE\left[\|\balpha_t-\balphastar\|^2\right].
\end{align*}
Our assumption that $\lim_{t\to\infty}\epsilon_t^2/\eta_t^3\rightarrow0$ implies that $\epsilon_t^2\leq C_1\eta_t^3$ for some  $C_1>0$ and $t>T_0$. Then for all $t>T_0$:
\begin{align}\label{last-bound-x}
    \EE[\|\xx_{t+1}-\xstar(\balpha_{t+1})\|^2]&\leq (1-1.5\mu\eta_t)\EE[\|\xx_t-\xstar(\balpha_t)\|^2]\nonumber\\
    +&c_4\eta_t^2+2\eta_t^2c_2^2\EE\left[\|\balpha_t-\balphastar\|^2\right],
\end{align}
for $c_4=2c_1^2+M_c+2C_1c_3^2+2C_1c_3^2/\mu$.

We next give a recursive bound on $\EE[\|\balpha_{t+1}-\balphastar\|^2]$. Let $\epsilon_t'=\epsilon_t/\gamma$. Then iteration \eqref{alpha-iter} can be written as $$\balpha_{t+1}=\balpha_t+\epsilon_t'(g(\balpha_t)-\balpha_t)+\epsilon_t'\bA\gamma(\xx_t-\xstar(\balpha_t)),$$
where $g(\balpha)=\balpha+\gamma(\bA\xstar(\balpha)-\bellstar)$.
    For ease of notation define the sequences $p_t=g(\balpha_t)-\balpha_t$ and $q_t=\gamma \bA(\xx_t-\xstar(\balpha_t))$. Let $\balphastar\in\Nopt$ be a fixed point of $g(\cdot)$. Then
    \begin{align*}
        &\|\balpha_{t+1}-\balphastar\|^2\\
        &=\|\balpha_t+\epsilon_t'(g(\balpha_t)-\balpha_t+q_t)-\balphastar\|^2\\
        &=\|(1-\epsilon_t')(\balpha_t-\balphastar)+\epsilon'_t(g(\balpha_t)-\balphastar+q_t)\|^2.
    \end{align*}
    Using Corollary 2.14 from \cite{convexbook}, we obtain
    \begin{align}\label{split2}
        \|\balpha_{t+1}-\balphastar\|^2&\leq(1-\epsilon_t')\|\balpha_t-\balphastar\|^2\nonumber\\
        &\;\;\;+\epsilon_t'\|g(\balpha_t)-\balphastar+q_t\|^2\nonumber\\
        &\;\;\;-\epsilon_t'(1-\epsilon_t')\|p_t+q_t\|^2.
    \end{align}
    We bound the second term as follows:
    \begin{align*}
        &\|g(\balpha_t)-\balphastar+q_t\|^2\\
        &=\|g(\balpha_t)-\balphastar\|^2+\|q_t\|^2+2\langle g(\balpha_t)-\balphastar,q_t\rangle\\
        &\leq \|\balpha_t-\balphastar\|^2+\|q_t\|^2+2\| g(\balpha_t)-\balphastar\|\|q_t\|\\
        &\leq \|\balpha_t-\balphastar\|^2+\|q_t\|^2+2\| \balpha_t-\balphastar\|\|q_t\|
    \end{align*}
    Here the first inequality stems from the non-expansive nature of map $g(\cdot)$ and the final inequality uses the fact that $g(\balphastar)=\balphastar$ along with the non-expansive nature of $g(\cdot)$. Now
    \begin{align*}
        &2\epsilon_t'\|\balpha_t-\balphastar\|\|q_t\|\leq 2\epsilon_t\|\bA\|\|\balpha_t-\balphastar\|\|\xx_t-\xstar(\balpha_t)\|\\
        &\leq \mu\eta_t/2\|\xx_t-\xstar(\balpha_t)\|^2+\frac{2\|\bA\|^2\epsilon_t^2}{\mu\eta_t}\|\balpha_t-\balphastar\|^2.
    \end{align*}
    Then returning to \eqref{split2}, 
    \begin{align*}
        &\|\balpha_{t+1}-\balphastar\|^2\leq \|\balpha_t-\balphastar\|^2+\epsilon_t'\gamma^2\|\bA\|^2\|\xx_t-\xstar(\balpha_t)\|^2\\
        &\;\;\;+\mu\eta_t/2\|\xx_t-\xstar(\balpha_t)\|^2+\frac{2\|\bA\|^2\epsilon_t^2}{\mu\eta_t}\|\balpha_t-\balphastar\|^2\\
        &\;\;\;-\epsilon_t'(1-\epsilon_t')\|p_t+q_t\|^2.
    \end{align*}
    Again using our assumption that $\epsilon_t^2\leq C_1\eta_t^3$ for some appropriate $C_1>0$ and $t>T_0$, and taking expectation we have 
    \begin{align}\label{last-bound-alpha}
        &\EE\left[\|\balpha_{t+1}-\balphastar\|^2\right]\nonumber\\
        &\leq \EE\left[\|\balpha_{t}-\balphastar\|^2\right]+\mu\eta_t\EE\left[\|\xx_t-\xstar(\balpha_t)\|^2\right]\nonumber\\
        &\;\;\;+2\eta_t^2c_2^2\EE\left[\|\balpha_{t}-\balphastar\|^2\right]-\epsilon_t'(1-\epsilon_t')\EE[\|p_t+q_t\|^2].
    \end{align}

    Adding \eqref{last-bound-x} and \eqref{last-bound-alpha}, and dropping  negative terms, we get
    \begin{align*}
        &\EE\left[\|\xx_{t+1}-\xstar(\balpha_{t+1})\|^2+\|\balpha_{t+1}-\balphastar\|^2\right]\\
        &\leq (1+4c_2^2\eta_t^2)\EE\left[\|\xx_t-\xstar(\balpha_t)\|^2+\|\balpha_{t}-\balphastar\|^2\right]+c_4\eta_t^2.
    \end{align*}
    Expanding the recursion for $t>T_0$ gives us
    \begin{align*}
        &\EE\left[\|\xx_{t}-\xstar(\balpha_{t})\|^2+\|\balpha_{t+1}-\balphastar\|^2\right]\\
        &\leq \EE\left[\|\xx_{T_0}-\xstar(\balpha_{T_0})\|^2+\|\balpha_{T_0}-\balphastar\|^2\right]\prod_{i=T_0}^{t-1}(1+4c_2^2\eta_i^2)\\
        &\;\;\;+c_4\sum_{i=T_0}^{t-1}\eta_i^2\prod_{j=i+1}^{t-1}(1+4c_2^2\eta_j^2)\\
        &\leq \EE\left[\|\xx_{T_0}-\xstar(\balpha_{T_0})\|^2+\|\balpha_{T_0}-\balphastar\|^2\right]e^{\left(4c_2^2\sum_{i=T_0}^{t-1}\eta_i^2\right)}\\
        &\;\;\;+c_4\sum_{i=T_0}^{t-1}\eta_i^2e^{\left(4c_2^2\sum_{j=i+1}^{t-1}\eta_j^2\right)}.
    \end{align*}
Here the final inequality uses that $(1+x)\leq e^x$ for any $x\in\RR$. Now we use our assumption that $\eta_t$ is square-summable. Hence, $\exp(4c_2^2\sum_{i=T_0}^{t-1}\eta_i^2)$ and $\sum_{i=T_0}^{t-1}\eta_i^2\exp\left(4c_2^2\sum_{j=i+1}^{t-1}\eta_j^2\right)$ are bounded by constants. Finally, $T_0$ is a constant dependent only on the stepsize sequence, and hence $\|\xx_{T_0}-\xstar(\balpha_{T_0})\|^2+\|\balpha_{T_0}-\balphastar\|^2$ can be bounded by a constant using the discrete Gronwall's inequality \cite{VSBbook}. This finally gives us that for some constant $C_0$
$$\EE\left[\|\xx_{t}-\xstar(\balpha_{t})\|^2+\|\balpha_{t+1}-\balphastar\|^2\right]\leq C_0,$$
which completes the proof for Lemma \ref{lemma:bounded}.
\end{proof}

Lemmas 4-6 are sufficient to prove Theorem \ref{thm:converge} (see the Appendix). We use two time-scale stochastic approximation to show that $(\xx_t,\balpha_t)\to(\xstar(\balpha_t),\balpha_t)$. Then we can analyze the slower iteration of the control updates \eqref{alpha-iter} in the limit $\xx_t\to\xstar(\balpha_t)$. We rewrite \eqref{alpha-iter} as a fixed point iteration with the mapping $g(\cdot)$. Since $g(\cdot)$ is non-expansive, this iteration is a version of the inexact KM algorithm \cite{convexbook}. We study its ODE limit and show convergence to the set of fixed points.

\subsection{Finite Time Convergence Guarantees}\label{subsec:finitesketch}
Our finite time convergence guarantees also use the non-expansiveness on $g$, but require a more nuanced argument that bounds the error as follows
\begin{align*}
    \EE[\|\bA\xx_t-\bellstar\|^2]&\leq 2\EE[\|\bA(\xx_t-\xstar(\balpha_t))\|^2]\\
    &\;\;\;\;\;\;\;+2\EE[\|\bA\xstar(\balpha_t)-\bellstar\|^2].
\end{align*}
We first bound the mean square distance between $\xx_t$ and $\xstar(\balpha_t)$, i.e., the NE corresponding to $\balpha_t$. The key to proving this lemma is the strong monotonicity of the game which gives a recursive bound on $\EE[\|\xx_t-\xstar(\balpha_t)\|^2]$. This recursive bound gives a rate of $O(\eta_t)$ when solved. 
\begin{lemma}\label{lemma:inner}
    Under Assumptions 1-5, there exist $C'>0$ and $T_0$ such that  for all $t>T_0$:
    $$\EE[\|\xx_t-\xstar(\balpha_t)\|^2]\leq C'\eta_t.$$
\end{lemma}

\begin{proof}
Returning to the the recursive bound \eqref{last-bound-x}  from the proof for Lemma \ref{lemma:bounded}, and using Lemma \ref{lemma:bounded} that shows that
    $\EE\left[\|\balpha_t-\balphastar\|^2\right]\leq C_0$, we now obtain the recursion
    \begin{align*}
    \EE[\|\xx_{t+1}-\xstar(\balpha_{t+1})\|^2]&\leq (1-1.5\mu\eta_t)\EE[\|\xx_t-\xstar(\balpha_t)\|^2]\nonumber\\
    &+\eta_t^2(c_4+2c_2^2C_0).
\end{align*}
For $\eta_t=(t+T_1)^{-\eta}$ with $0<\eta<1$, this yields \cite{chen2020finite}
$$\EE[\|\xx_t-\xstar(\balpha_t)\|^2]\leq C'\eta_t,$$
for $t>T_0$ for some constant $C'$.
\end{proof}

The next lemma bounds the mean squared distance between $\bA\xstar(\balpha_t)$ and $\bellstar$. Note that we need to bound $\EE[\|g(\balpha_t)-\balpha_t\|^2]$ as $g(\balpha)-\balpha=\gamma(\bA\xstar(\balpha)-\bellstar)$. We prove the bound in the following lemma using Lemma \ref{lemma:inner} and a technique similar to the bound for (inexact) KM algorithm. If we could assume that $\epsilon_t>\epsilon^*$ for some $\epsilon^*$, then $\EE[\|g(\balpha_t)-\balpha_t\|^2]$ is known to decay at the rate $o(1/t)$ \cite{nonexp-1}. Since this assumption is not true in our case, we use the assumptions on the stepsize sequences to get a convergence rate of $O(1/(t\epsilon_t))$ instead.

\begin{lemma}\label{lemma:outer}
    Let Assumptions 1-5 be satisfied. Then, there exist $C''>0$ and $T_0>0$ such that for all $t>T_0$.
    $$\EE[\|\bA\xstar(\balpha_t)-\bellstar\|^2]\leq \frac{C''}{t\epsilon_t}.$$
\end{lemma}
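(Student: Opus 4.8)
The plan is to recognize \eqref{alpha-iter} as an \emph{inexact} Krasnosel'ski\u{i}--Mann (KM) iteration for the non-expansive map $g$ and then run a KM-style energy argument, turning it into a pointwise rate by a windowed telescoping matched to the two step-size scales. First I would rewrite the update as $\balpha_{t+1}=\balpha_t+\epsilon_t'(g(\balpha_t)-\balpha_t)+\epsilon_t' q_t$ with $\epsilon_t'=\epsilon_t/\gamma$ and $q_t=\gamma\bA(\xx_t-\xstar(\balpha_t))$, and record that $\EE\|q_t\|^2\le \gamma^2\|\bA\|^2 C'\eta_t$ by Lemma~\ref{lemma:inner}. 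Since $g(\balpha_t)-\balpha_t=\gamma(\bA\xstar(\balpha_t)-\bellstar)$, the claim is equivalent to $\EE\|g(\balpha_t)-\balpha_t\|^2=O(1/(t\epsilon_t))$, so writing $p_t=g(\balpha_t)-\balpha_t$ it suffices to bound $\EE\|p_t\|^2$.

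The two workhorses are: (i) a Fej\'er-type energy inequality, essentially already derived inside the proof of Lemma~\ref{lemma:bounded}: with $\balphastar$ any fixed point of $g$, expanding $\|\balpha_{t+1}-\balphastar\|^2$ via Corollary~2.14 of \cite{convexbook} and using non-expansiveness of $g$ (Lemma~\ref{lemma:g-prop}) gives
\[
\epsilon_t'(1-\epsilon_t')\,\EE\|p_t+q_t\|^2\le \EE\|\balpha_t-\balphastar\|^2-\EE\|\balpha_{t+1}-\balphastar\|^2+e_t ,
\]
where the remainder $e_t$ collects the $q_t$-cross terms and is of order $\epsilon_t\sqrt{\eta_t}$ after invoking $\EE\|\balpha_t-\balphastar\|^2\le C_0$ (Lemma~\ref{lemma:bounded}), $\EE\|q_t\|^2=O(\eta_t)$, Cauchy--Schwarz and a Young split; since $\|p_t\|^2\le 2\|p_t+q_t\|^2+2\|q_t\|^2$, for $t$ large this yields $\epsilon_t'\,\EE\|p_t\|^2\lesssim \EE\|\balpha_t-\balphastar\|^2-\EE\|\balpha_{t+1}-\balphastar\|^2+O(\epsilon_t\sqrt{\eta_t})$. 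And (ii) an approximate non-expansiveness of $\{\|p_t\|\}$, mimicking the noiseless KM analysis: from $p_{t+1}=(1-\epsilon_t')p_t+\bigl(g(\balpha_{t+1})-g(\balpha_t)\bigr)-\epsilon_t'q_t$ and $\|g(\balpha_{t+1})-g(\balpha_t)\|\le\|\balpha_{t+1}-\balpha_t\|$ one gets $\|p_{t+1}\|\le\|p_t\|+2\epsilon_t'\|q_t\|$, hence $\EE\|p_{t+1}\|^2\le \EE\|p_t\|^2+O(\epsilon_t\sqrt{\eta_t})$; so $\EE\|p_t\|^2$ is non-increasing up to a controllable correction.

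I would then convert this into a pointwise rate: sum the energy inequality of (i) over the window $s\in\{\lceil t/2\rceil,\dots,t\}$, use (ii) to lower-bound each $\EE\|p_s\|^2$ by $\EE\|p_t\|^2$ minus the accumulated correction, telescope the $\EE\|\balpha_s-\balphastar\|^2$ terms, and divide by $\sum_{s=\lceil t/2\rceil}^t\epsilon_s'=\Theta(t\epsilon_t)$. The uniform bound $\EE\|\balpha_{\lceil t/2\rceil}-\balphastar\|^2\le C_0$ from Lemma~\ref{lemma:bounded} makes the leading term $O(1/(t\epsilon_t))$, and Assumption~\ref{assu-stepsize} (in particular $\eta>\tfrac12$ and $2\epsilon>3\eta$, which give $\epsilon>\tfrac34$ and $\epsilon+\tfrac{\eta}{2}>2\eta>1$) is exactly what forces the remaining error contributions below $O(1/(t\epsilon_t))$ over the window. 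Then $\EE\|\bA\xstar(\balpha_t)-\bellstar\|^2=\gamma^{-2}\EE\|p_t\|^2=O(1/(t\epsilon_t))$ gives the lemma.

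The main obstacle is precisely this last conversion step. Because $\epsilon_t\downarrow 0$, the ``effective horizon'' $\sum_{s\le t}\epsilon_s\asymp t\epsilon_t$ grows only polynomially, so---unlike the constant (or bounded-below) step case, where $\|g(\balpha_t)-\balpha_t\|$ is exactly non-increasing and one obtains $o(1/t)$ for free \cite{nonexp-1}---the inexactness error $q_t$ of per-step size $\sqrt{\eta_t}$ destroys true monotonicity, and one must choose the summation window and split each error term so that all corrections accumulate to at most $O(1/(t\epsilon_t))$ across the window. Making this accounting close is what the coupling conditions in Assumption~\ref{assu-stepsize} are calibrated for, and it is the delicate part of the argument; the inputs from Lemmas~\ref{lemma:inner} and \ref{lemma:bounded} (the $O(\eta_t)$ inner-loop tracking error and the uniform $L^2$ boundedness of $\balpha_t$) are what keep every error term controllable enough for it to go through.
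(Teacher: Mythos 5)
Your overall architecture matches the paper's proof almost exactly: rewrite \eqref{alpha-iter} as an inexact KM iteration, derive the Fej\'er energy inequality $\epsilon_t'(1-\epsilon_t')\EE\|p_t+q_t\|^2\le \EE\|\balpha_t-\balphastar\|^2-\EE\|\balpha_{t+1}-\balphastar\|^2+e_t$ from Lemma \ref{lemma:bounded}, establish approximate monotonicity of $\EE\|p_t\|^2$, telescope, and divide by $\sum_i\epsilon_i'=\Theta(t\epsilon_t)$ (your window $[t/2,t]$ versus the paper's full range $[T_0,t]$ is immaterial). However, there is a genuine quantitative gap in your step (ii). From $\|p_{t+1}\|\le\|p_t\|+2\epsilon_t'\|q_t\|$, squaring and taking expectations produces the cross term $4\epsilon_t'\,\EE[\|p_t\|\|q_t\|]=O(\epsilon_t'\sqrt{\eta_t})$, so your per-step drift is $O(\epsilon_t\sqrt{\eta_t})=O(t^{-\epsilon-\eta/2})$. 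Accumulated over the window this gives a correction of order $t^{1-\epsilon-\eta/2}$; at the near-optimal choice $\eta=0.5+\delta/3$, $\epsilon=0.75+\delta$ this is $t^{-7\delta/6}$, which for small $\delta$ dominates the target $1/(t\epsilon_t)=t^{-1/4+\delta}$ (one needs $13\delta/6\ge 1/4$ for your correction to be absorbed). So the accounting you identify as the delicate part does not in fact close with the drift bound you propose.

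The fix, which is what the paper does, is to not take norms before squaring: writing $p_t=(\balpha_{t+1}-\balpha_t)/\epsilon_t'-q_t$, using $2\langle g(\balpha_{t+1})-g(\balpha_t)-(\balpha_{t+1}-\balpha_t),\balpha_{t+1}-\balpha_t\rangle\le-\|p_{t+1}-p_t\|^2$ (non-expansiveness), and then completing the square in the remaining $-\frac{1-\epsilon_t'}{\epsilon_t'}\|p_{t+1}-p_t\|^2-2\langle p_{t+1}-p_t,q_t\rangle$ yields $\|p_{t+1}\|^2\le\|p_t\|^2+\frac{\epsilon_t'}{1-\epsilon_t'}\|q_t\|^2$ with \emph{no} cross term. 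The per-step drift is then $O(\epsilon_t'\eta_t)$, and the resulting double sum $\sum_i\epsilon_i'\sum_{j\ge i}\epsilon_j'\eta_j=O(\sum_i i^{-(4\eta-1)})$ converges precisely because $2\epsilon\ge 3\eta$ and $\eta>1/2$ give $4\eta-1>1$. With this replacement your argument goes through; without it, you would at best obtain a rate of order $t^{-7\delta/6}$ rather than $O(1/(t\epsilon_t))$, unless you add a bootstrapping step to sharpen $\EE[\|p_t\|\|q_t\|]$ iteratively, which you do not propose.
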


\begin{proof}
   Define $p_t=g(\balpha_t)-\balpha_t$, $q_t=\gamma \bA(\xx_t-\xstar(\balpha_t))$ and $\epsilon_t'=\epsilon_t/\gamma$. Returning to the recursive bound \eqref{last-bound-alpha} from the proof of Lemma \ref{lemma:bounded}, and now using Lemma \ref{lemma:bounded} and Lemma \ref{lemma:inner}, 
    \begin{align}\label{split22}
        &\EE\left[\|\balpha_{t+1}-\balphastar\|^2\right]\nonumber\nonumber\\
        &\leq \EE\left[\|\balpha_{t}-\balphastar\|^2\right]+\eta_t^2c_5^2-\epsilon_t'(1-\epsilon_t')\EE[\|p_t+q_t\|^2]
    \end{align}
    for some constant $c_5$. For the third term above, we note that  
    \begin{align*}
        -\|p_t+q_t\|^2&= -\|p_t\|^2-\|q_t\|^2-2\langle p_t,q_t\rangle \\
        &\leq -\|p_t\|^2-\|q_t\|^2+2\|p_t\|\|q_t\|\\
        &\leq -\|p_t\|^2+c_6\|q_t\|.
    \end{align*}
    Note that $p_t=g(\balpha_t)-\balpha_t=\gamma(\bA\xstar(\balpha_t)-\bellstar)$. Hence the last inequality follows from the compactness of $\XX$ for some constant $c_6>0$. This implies $
        \EE\left[-\|p_t+q_t\|^2\right]\leq \EE\left[-\|p_t\|^2\right]+2c_7\sqrt{\eta_t},$
    for some constant $c_7>0$. This inequality follows from Lemma \ref{lemma:inner} and since $\EE[\|q_t\|]^2\leq \EE[\|q_t\|^2]$. 
    
    Using our assumption that $\epsilon_t$ goes to zero, we have that $0<\epsilon_t'<1/2$ for all $t>T_0$. For the rest of the proof, we pick $T_0$ to be large enough such that  $\epsilon_t^2\leq C_1\eta_t^3$, $\epsilon_t<\gamma/2$ and $\eta_t<C_2$ for all $t>T_0$, for appropriate $C_1$ and $C_2$. 
    Returning to \eqref{split22}, for all $t>T_0$, we have
    \begin{align*}
        &\EE\left[\|\balpha_{t+1}-\balphastar\|^2\right]\\
        &\leq\EE\left[\|\balpha_t-\balphastar\|^2\right]+\eta_t^2c_5^2+\epsilon_t'\sqrt{\eta_t}c_7-(\epsilon_t'/2)\EE[\|p_t\|^2].
    \end{align*}
    This gives us 
    \begin{align*}
        (\epsilon_t'/2)\EE[\|p_t\|^2]&\leq \EE[\|\balpha_t-\balphastar\|^2]- \EE[\|\balpha_{t+1}-\balphastar\|^2]\\
        &\;\;\;+\eta_t^2c_5^2+\epsilon_t'\sqrt{\eta_t}c_7.
    \end{align*}
    Summing from $i=T_0$ to $t$ gives us
    \begin{align}\label{almostdone}
        \sum_{i=T_0}^t (\epsilon_i'/2)\EE[\|p_i\|^2]&\leq \EE[\|\balpha_{T_0}-\balphastar\|^2]- \EE[\|\balpha_{t+1}-\balphastar\|^2]\nonumber\\
        &\;\;+\sum_{i=T_0}^t (\eta_i^2c_5^2+\epsilon_i'\sqrt{\eta_i}c_7).
    \end{align}
    The left hand term in the above inequality depends on $\|p_i\|^2$ for all $i\in\{T_0,\ldots,t\}$. We next show a bound for $\|p_i\|^2$ in terms of $\|p_t\|^2$ for all $i\in\{T_0,\ldots,t\}$. For this, we first recall that $\balpha_{t+1}=\balpha_t+\epsilon_t'(p_t+q_t).$ Next, 
\begin{align*}
    &\|p_{t+1}\|^2=\|p_t\|^2+\|p_{t+1}-p_t\|^2+2\langle p_{t+1}-p_t,p_t\rangle\\
        &\leq \|p_t\|^2+\|p_{t+1}-p_t\|^2+\frac{2}{\epsilon_t'}\langle p_{t+1}-p_t,\balpha_{t+1}-\balpha_t-\epsilon_t'q_t\rangle
    \end{align*}
    Now,
    \begin{align*}
        &2\langle p_{t+1}-p_t, \balpha_{t+1}-\balpha_t\rangle\\
        &=2\langle g(\balpha_{t+1})-g(\balpha_t)-\balpha_{t+1}+\balpha_t, \balpha_{t+1}-\balpha_t\rangle\\
        &=2\langle g(\balpha_{t+1})-g(\balpha_t), \balpha_{t+1}-\balpha_t\rangle-2\|\balpha_{t+1}-\balpha_t\|^2\\
        &=\|g(\balpha_{t+1})-g(\balpha_t)\|^2+\|\balpha_{t+1}-\balpha_t\|^2-2\|\balpha_{t+1}-\balpha_t\|^2\\
        &\;\;-\|g(\balpha_{t+1})-g(\balpha_t)-\balpha_{t+1}+\balpha_t\|^2\\
        &=\|g(\balpha_{t+1})-g(\balpha_t)\|^2-\|\balpha_{t+1}-\balpha_t\|^2-\|p_{t+1}-p_t\|^2\\
        &\leq -\|p_{t+1}-p_t\|^2,
    \end{align*}
    where the last inequality uses that $g(\cdot)$ is non-expansive. Then, 
    \begin{align*}
        &\|p_{t+1}\|^2\leq \|p_t\|^2-\frac{1-\epsilon_t'}{\epsilon_t'}\|p_{t+1}-p_t\|^2-2\langle p_{t+1}-p_t,q_t \rangle\\
        &=\|p_t\|^2-\frac{1-\epsilon_t'}{\epsilon_t'}\left\|p_{t+1}-p_t+\frac{\epsilon_t'}{1-\epsilon_t'}q_t\right\|^2+\frac{\epsilon_t'}{1-\epsilon_t'}\|q_t\|^2\\
        &\leq \|p_t\|^2+\frac{\epsilon_t'}{1-\epsilon_t'}\|q_t\|^2.
    \end{align*}
    Using the assumption that $0<\epsilon_t'<1/2$ for all $t>T_0$, this gives us 
    $\|p_{t+1}\|^2\leq \|p_t\|^2+2\epsilon_t'\|q_t\|^2.$
    Taking expectation and using Lemma \ref{lemma:inner}, this results in the bound
    \begin{align*}
        \EE[\|p_i\|^2]&\geq \EE[\|p_t\|^2]-\sum_{j=i}^{t-1}2\epsilon_j'\EE\|q_j\|^2\\
        &\geq \EE[\|p_t\|^2]-\sum_{j=i}^{t-1}2c_8\epsilon_j'\eta_j,
    \end{align*}
    for some constant $c_8>0$. 
    Returning to \eqref{almostdone}, we now have
    \begin{align}\label{almostdone2}
        &\EE[\|p_t\|^2]\sum_{i=T_0}^t (\epsilon_i'/2)\leq \EE[\|\balpha_{T_0}-\balphastar\|^2]\nonumber\\
        &\;\;+\sum_{i=T_0}^t (\eta_i^2c_5^2+\epsilon_i'\sqrt{\eta_i}c_7)+2c_8\sum_{i=T_0}^t\epsilon_i'\sum_{j=i}^{t-1}\epsilon_j'\eta_j.
    \end{align}
    Now we use our assumption that $\epsilon_t^2\leq C_1\eta_t^3$ for all $t>T_0$. Then $\epsilon_i'\sqrt{\eta_i}=O(\eta_i^2)$ and hence the sequence $\{\epsilon_i'\sqrt{\eta_i}\}$ is summable. Finally for the double summation, we use the integral test and Assumption \ref{assu-stepsize} to obtain 
    \begin{align*}
    &\sum_{i=T_0}^t\epsilon_i'\sum_{j=i}^{t-1}\epsilon_j'\eta_j\leq \frac{1}{\gamma^2}\sum_{i=T_0}^\infty\frac{1}{(i+T_2)^\epsilon}\sum_{j=i}^\infty \frac{1}{(j+T_3)^{\epsilon+\eta}}\\
    &\leq \sum_{i=T_0}^\infty\frac{1}{(i+T_2)^\epsilon}\frac{c_9}{(i+T_3)^{\epsilon+\eta-1}}\leq \sum_{i=T_0}^\infty\frac{c_{10}}{(i+T_3)^{4\eta-1}}.
    \end{align*}
    Here $c_9$ and $c_{10}$ are constants and $T_3=\max\{T_1,T_2\}$. The final inequality follows from the fact that $\epsilon_t^2\leq C_1\eta_t^3$ and hence $2\epsilon\geq 3\eta$. Since $\eta_t$ is square summable, $\eta>0.5$ and hence $4\eta-1>1$, so the summation is bounded by a constant. Hence all terms in the right-hand side of \eqref{almostdone2} are bounded by a constant. Then for some $c_{11}>0$, we have
    $$\EE[\|p_t\|^2]\sum_{i=T_0}^t (\epsilon_i'/2)\leq c_{11}.$$
    Now $\sum_{i=T_0}^t (\epsilon_i'/2)\geq (t-T_0+1)\epsilon_t'/2$.
    This finally gives us the bound
    $\EE[\|p_t\|^2]\leq C''/(t\epsilon_t),$
    for all $t>T_0$.
\end{proof}

\section{Simulation Results}\label{sec:simulations}

\begin{figure*}[!h]
\centering
\begin{subfigure}{.3\textwidth}
  \centering
  \includegraphics[width=0.99\linewidth]{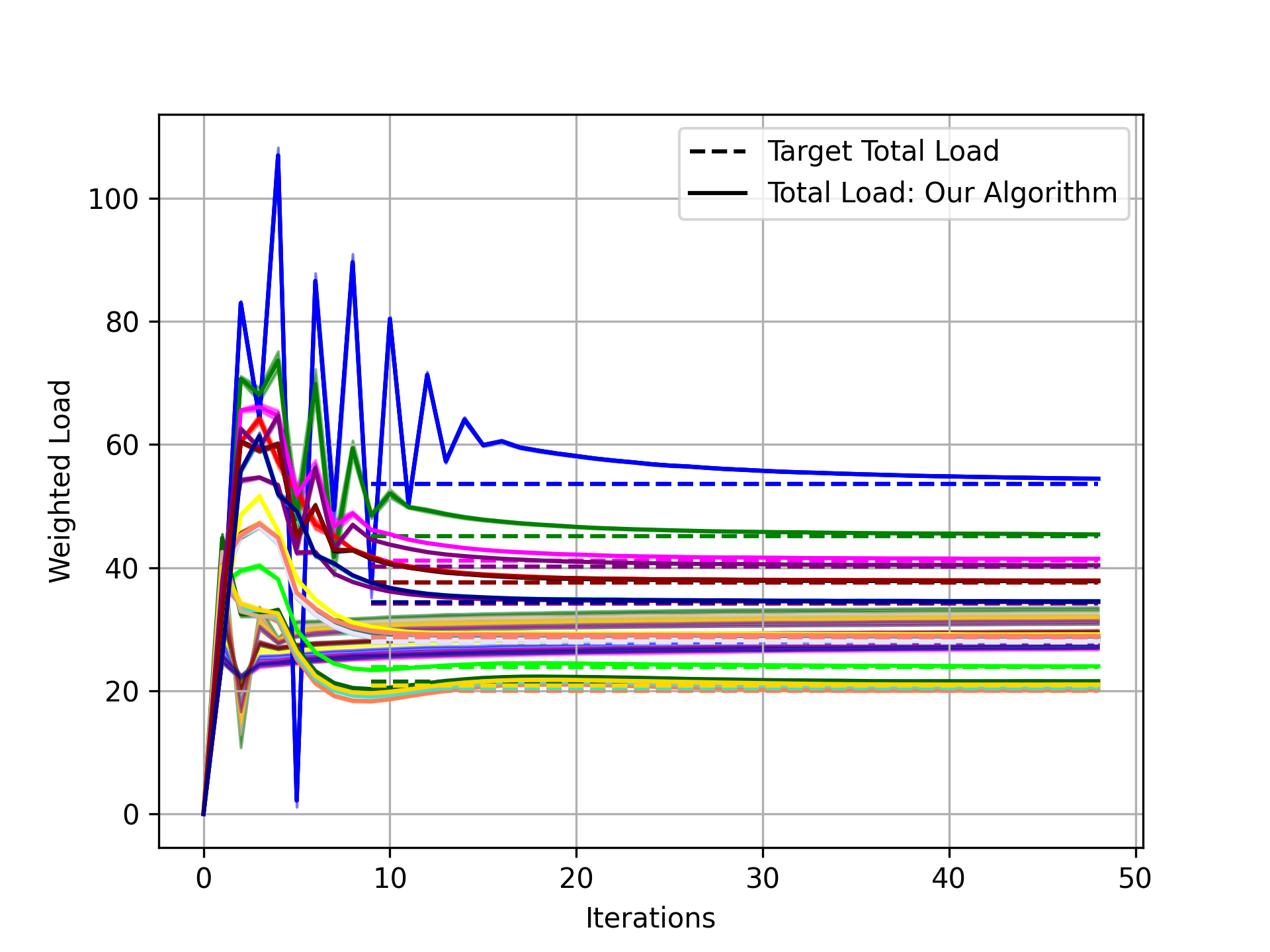}
  \caption{Loads for all resources}
  \label{fig:allo_1}
\end{subfigure}%
\begin{subfigure}{.3\textwidth}
  \centering
  \includegraphics[width=.99\linewidth]{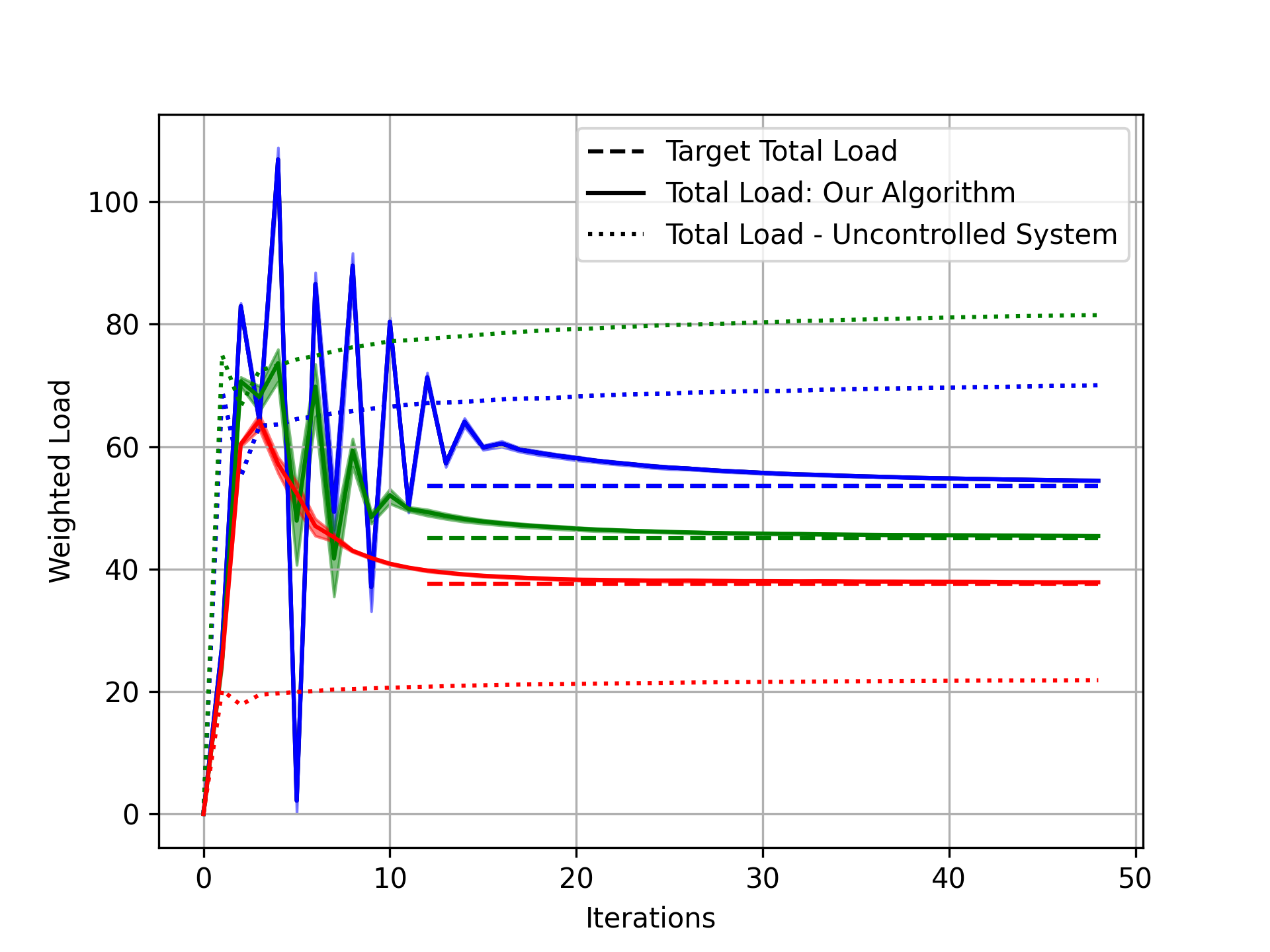}
  \caption{Loads for 3 resources}
  \label{fig:allo_2}
\end{subfigure}
\begin{subfigure}{.3\textwidth}
  \centering
  \includegraphics[width=.99\linewidth]{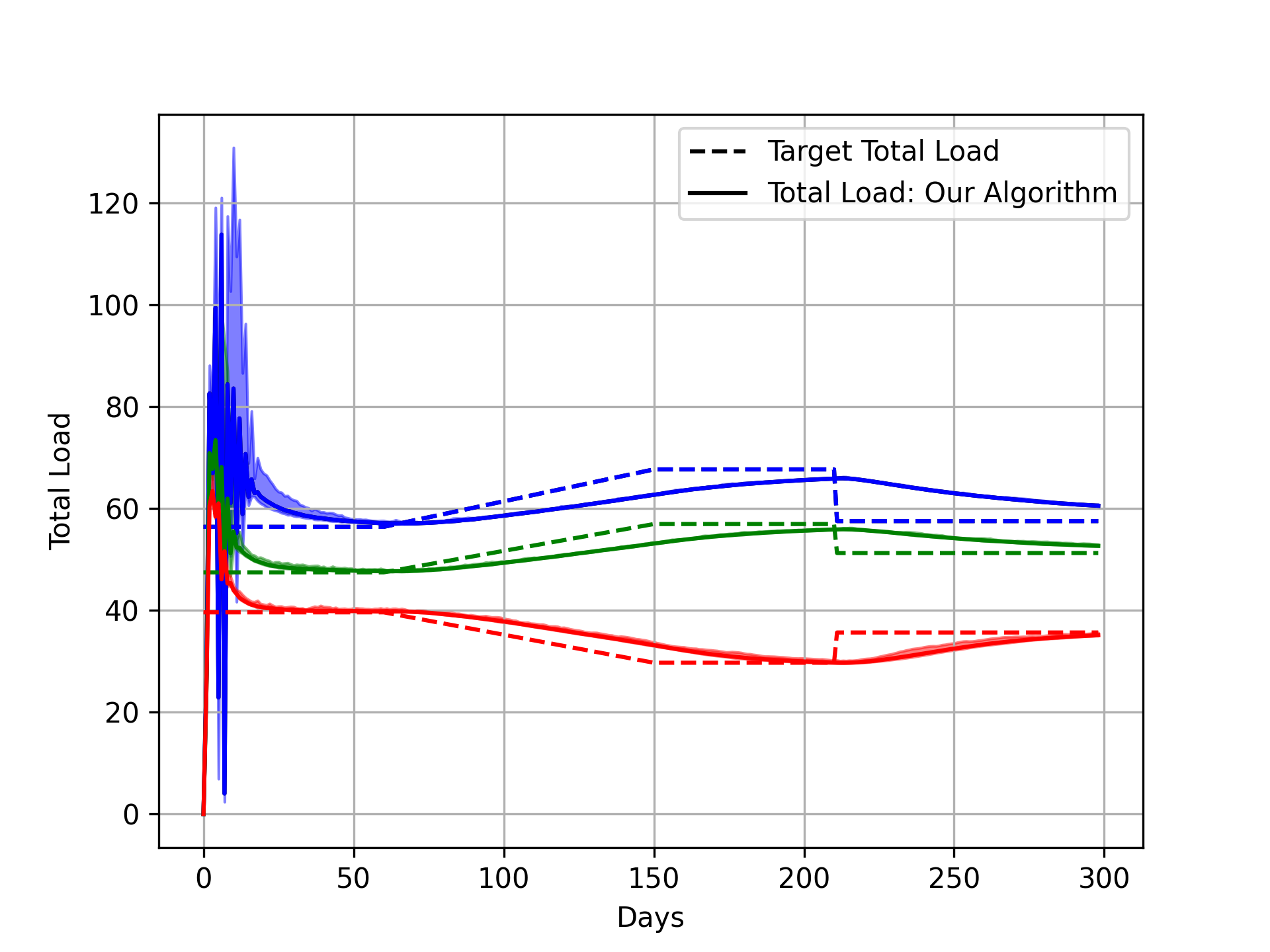}
  \caption{Load balancing for time-varying loads}
  \label{fig:allo_3}
\end{subfigure}
\caption{Load Balancing in Resource Allocation Games}
\label{fig:allo}
\end{figure*} 
\begin{figure*}[!h]
\centering
\begin{subfigure}{.3\textwidth}
  \centering
  \includegraphics[width=0.96\linewidth]{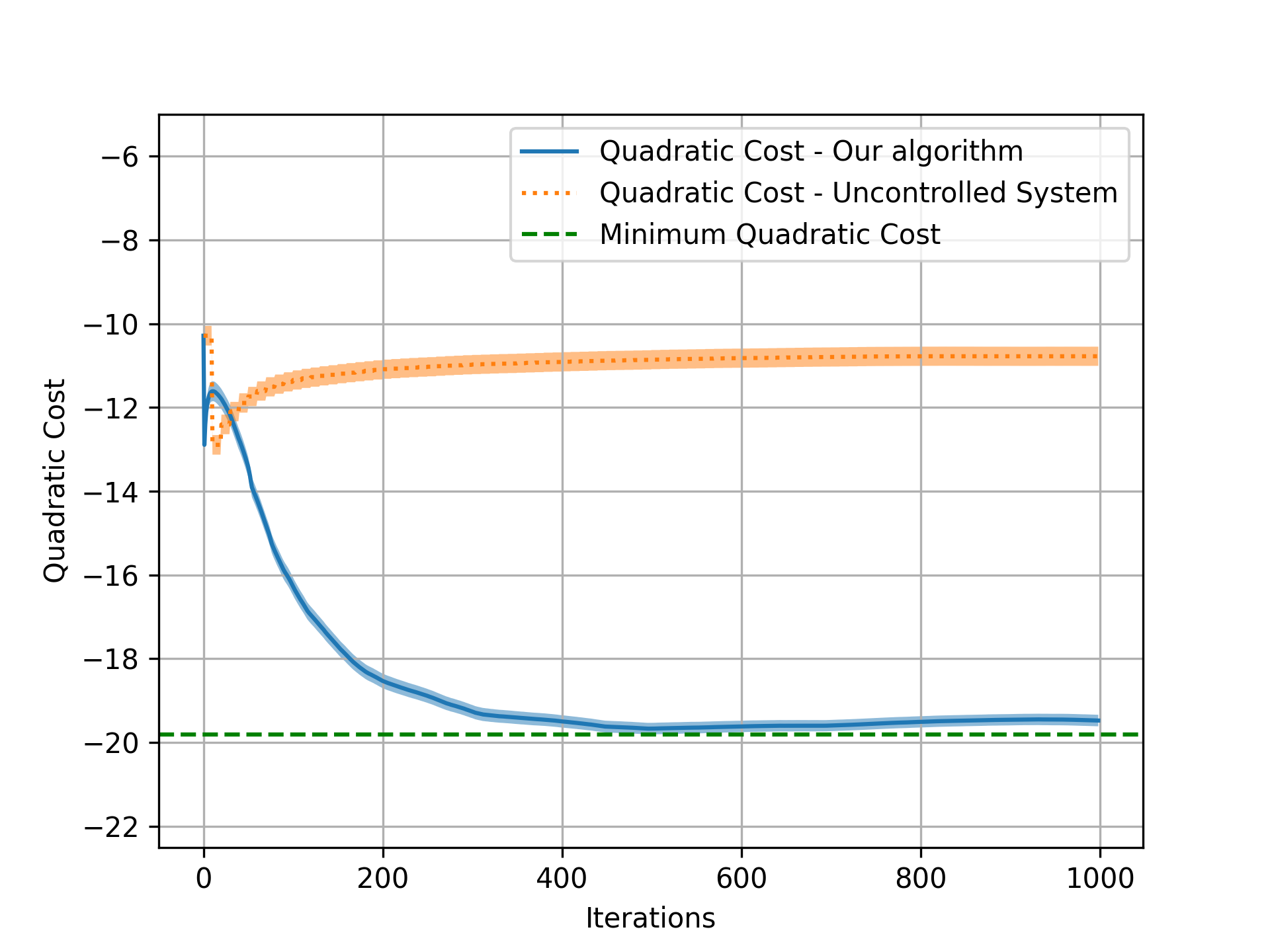}
  \caption{Quadratic cost}
  \label{fig:quad_1}
\end{subfigure}%
\begin{subfigure}{.3\textwidth}
  \centering
  \includegraphics[width=.96\linewidth]{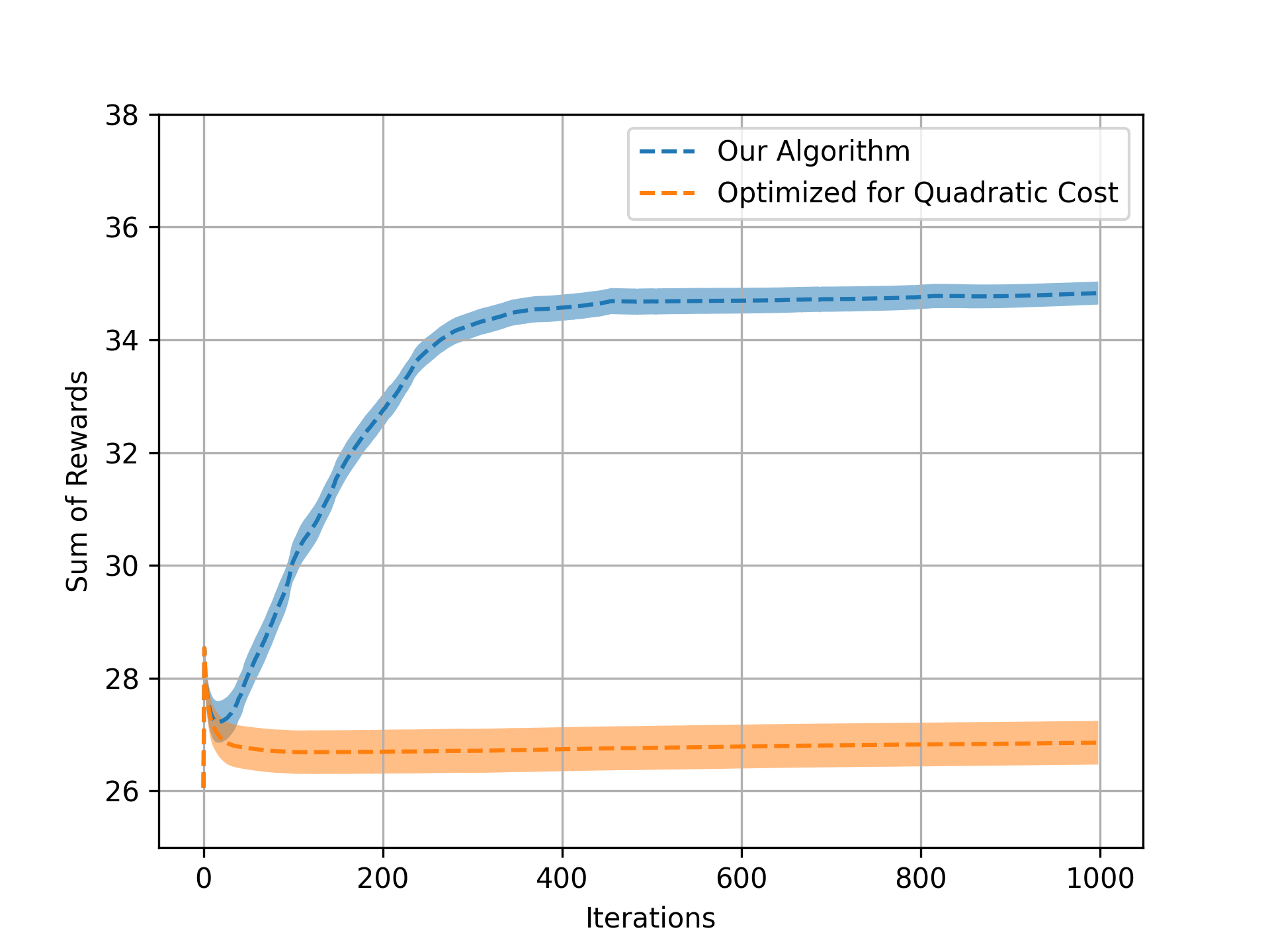}
  \caption{Sum of rewards of all players}
  \label{fig:quad_2}
\end{subfigure}
\begin{subfigure}{.3\textwidth}
  \centering
  \includegraphics[width=.96\linewidth]{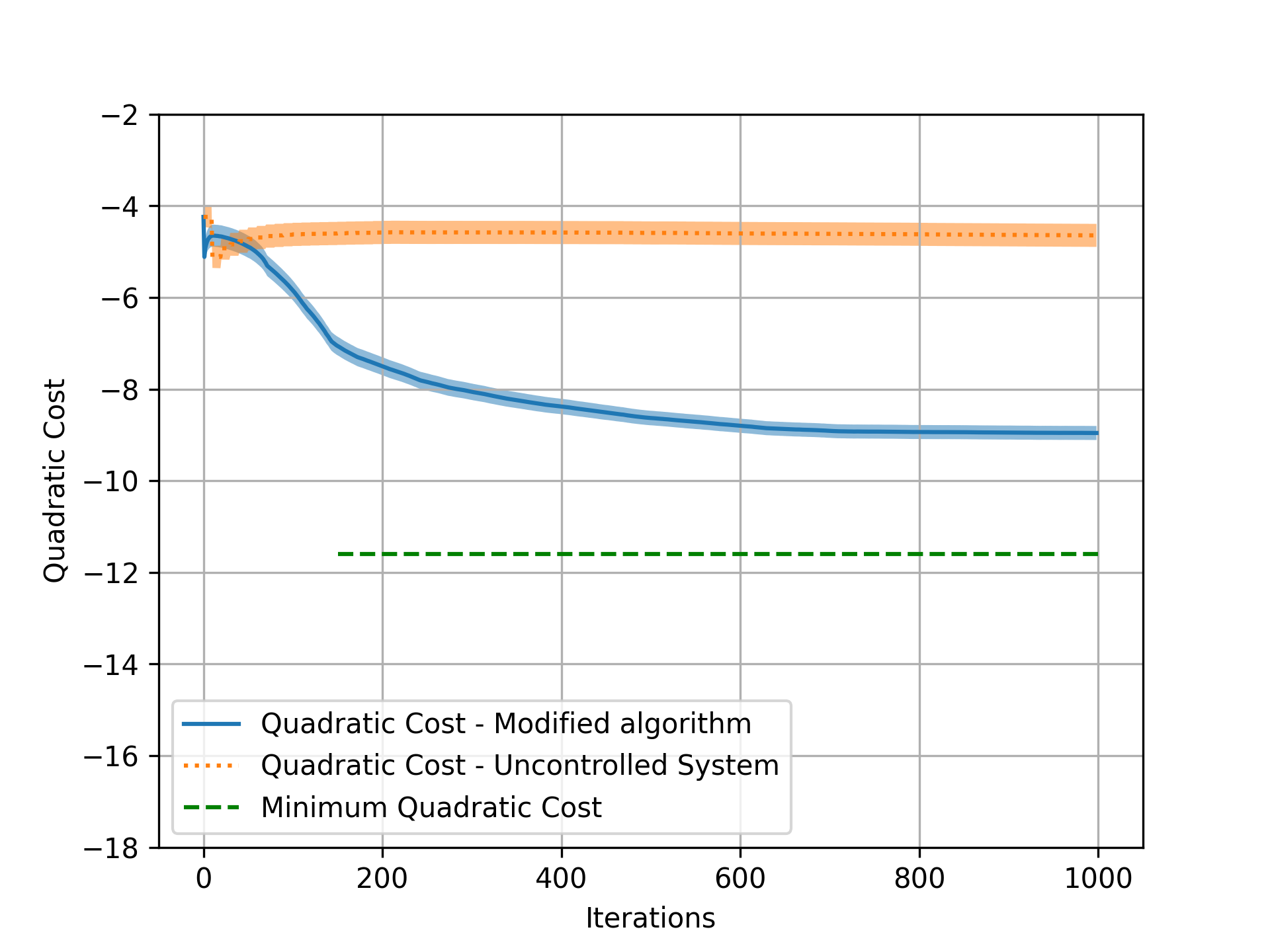}
  \caption{Cost when no $\xx\in\XX$ satisfies $2\boldsymbol{G}\xx+\boldsymbol{\rho}=0$}
  \label{fig:quad_3}
\end{subfigure}
\caption{Quadratic Global Objective}
\label{fig:quad}
\end{figure*} 
In this section, we simulate
the two applications discussed in Section \ref{subsec:Applications}. In all experiments, we use the step-size sequences $\eta_{t}=\frac{1}{\left(t+1\right)^{0.501}}$
and  $\epsilon_{t}=\frac{1}{\left(t+1\right)^{0.753}}$ since they approximately give the best rate (see Theorem  \ref{thm:finite}). To model imperfect gradient information (see Assumption 3), we add Gaussian noise with mean $0$
and variance $\sigma^{2}=\frac{1}{4}$ to the gradients
of each player.
We ran 100 realizations
for each experiment and plotted the average result along with the
standard deviation region, which was always small. In
each realization, $\alpha_{0}^{k}$ and $ x_{n,0}^{k}$
were chosen uniformly and independently at random on $\left[0,2\right]$
and $\left[0,0.1\right]$, respectively. 

We simulated a demand-side management game (Figure \ref{fig:allo}) \cite{chen2014autonomous} with $N=1000$ and $d=24$ hours. The action sets were $\XX_n=\{\xx_n|0\leq x_n^{i}\leq \mu_{n}^i, \sum_{i=1}^dx_{n}^i\leq\mu_n\}$, and $\mu_n^i$ was sampled independently for each $n,i$ from a Gaussian distribution with mean $\hat{\mu}^i$ and variance $0.5$. Here $\hat{\mu}^i$ denote the typical hourly energy consumption limit of a residential customer as given in \cite[Fig.~1]{chen2014autonomous}. 
Similarly, $\mu_n$ was sampled independently for each $n$ from a Gaussian distribution with mean $\hat{\mu}$ and variance $1$, where $\hat{\mu}$ is the typical daily consumption limit. Let $s_i=\sum_{n=1}^N x_n^i$ be the energy consumption at the $i$-th hour. We use the reward function $r_n(\xx)=\sum_{i=1}^d \omega_{n}^ix_{n}^i-0.3(x_n^i)^2-0.01x_n^is_i^2.$
Here $\omega_n^i$ was sampled independently for each $n,i$ from a Gaussian distribution with mean $\hat{\omega}^i$ and variance $0.5$. The value of energy at hour $i$, $\hat{\omega}^i$, was adapted from the demand characteristics of residential consumers \cite{power-book}. The objectives $s_i=\ell_i^*$ were chosen to shave the peak of the uncontrolled system \cite{chen2014autonomous}, with the following target loads:
\begin{align*}
    \bell^*=[&53.58, 45.09, 37.63, 34.32, 41.16, 28.95, 37.55, 34.18,\\ &27.99, 28.15, 28.71, 28.37,
 19.96, 27.80,  23.83, 27.93,\\ &20.45, 21.44, 20.86, 28.67, 20.89, 40.12, 37.64, 34.32].
\end{align*}

In Fig. \ref{fig:allo_1}, we plot the weighted load for each resource along with their target loads. We can observe that the system converges to a NE where the loads are balanced. In Fig. \ref{fig:allo_2}, we plot the weighted load for the first 3 resources. To study the effectiveness of our algorithm, we also plot the loads in the case of an uncontrolled system, i.e., where $\balpha=
\boldsymbol{0}$. It can be observed that if the system is uncontrolled, some resources are under-utilized while others are over-utilized. 

For the quadratic global objective (Figure \ref{fig:quad}), we consider a quadratic game. In a quadratic game, the reward of each player is of the form $r_{n}\left(\boldsymbol{x}\right)=\boldsymbol{x}^{T}\boldsymbol{Q_n}\boldsymbol{x}+\boldsymbol{c_n}^T\boldsymbol{x}_{n}$ for some negative definite matrix $\boldsymbol{Q_n}\in\RR^{Nd\times Nd}$ and $\boldsymbol{c_n}\in\RR^d$. 

We work with $N=100$ players and $d=5$, and action sets $\{\xx_n\mid0\leq\sum_{i=1}^d x_n^i\leq 10\}$ for each player $n$. Recall that the global cost is given by $\xx^\top \boldsymbol{G}\xx+\boldsymbol{\rho}^T\xx$. To generate $\boldsymbol{G}$, we first generate random $\boldsymbol{H_0}\in\RR^{Nd\times (Nd-5)}$ with each element of $\boldsymbol{H_0}$ sampled independently from a standard Gaussian distribution. We use $\boldsymbol{G}=\boldsymbol{H_0}\boldsymbol{H_0}^T$. A random $\boldsymbol{\rho}$ is obtained with each element sampled from a Gaussian distribution with mean 0 and variance $0.5$. The above matrices $\boldsymbol{Q_n}$ are obtained by sampling $\boldsymbol{H_n}\in\RR^{Nd\times Nd}$ and using $\boldsymbol{Q_n}=\boldsymbol{H_n}\boldsymbol{H_n}^T$. The vectors $\boldsymbol{c_n}$ are sampled in a similar way to $\boldsymbol{\rho}$.

Fig. \ref{fig:quad_1} shows that our algorithm converges to the optimal global cost, which is significantly better than the global cost of the uncontrolled system. Even more, Fig. \ref{fig:quad_2} shows an advantage of our scheme over a centralized scheme that optimizes the global cost $\Phi(\xx)$ directly, i.e.,  where each player updates their action using $\nabla \Phi(\xx)$. While both approaches minimize the global cost, our algorithm results in a significantly better sum of rewards $\sum_{n=1}^{N}r_n(\boldsymbol{x})$. The two plots together show that our algorithm can optimize the global cost under the soft constraint of maintaining a high sum of local player rewards.

In Figures \ref{fig:allo_3} and \ref{fig:quad_3}, we study the performance of our algorithm beyond our theoretical model and guarantees. In Fig.~\ref{fig:allo_3}, the target loads $\bellstar_t$ vary over time. The objective now is $\sum_{n=1}^Nx_{n,t}^i=\ell_{i,t}^*$. Our algorithm  easily tracks the time-varying loads under both smoothly varying loads and abruptly changing load profiles. In Fig.~\ref{fig:quad_3}, we study the case where the linear constraints are infeasible, i.e., $\bA\xx=\bellstar$ is not satisfied by any $x\in\XX$. In this case, we make a slight modification to our algorithm by projecting the control input $\balpha$ onto a sphere after each iteration. This is required to ensure the stability of the algorithm. Although it is no longer possible to achieve the optimal cost, our modified algorithm still demonstrates stable behavior and convergence to an action profile with a lower quadratic cost as compared to the uncontrolled system.

\section{Conclusion}\label{sec:conclusion}

We studied how to control a strongly monotone game that is unknown to the manager. The goal of the manager is to ensure the game satisfies target linear constraints at NE. Our simple algorithm
only requires the manager to observe the instantaneous constraint violation.
We prove that our algorithm guarantees convergence to an $\xstar$ which satisfies  $\bA\xstar=\bellstar$ by adjusting control inputs $\balpha$ in real-time. Further, we show that the convergence rate for our algorithm is $O(t^{-1/4})$. Our two time-scale algorithm does not wait for the players to converge to a NE but uses the feedback from every turn. This is an encouraging
result since our algorithm is easy to implement and maintains the
users' privacy. The manager does not observe individual actions, and players do not reveal their reward functions, which
might even be unknown to them. 

Technically, we analyze how the NE of the game
changes when the manager changes $\balpha_{t}$. Our algorithm
essentially controls the unknown game such that its NE becomes desirable
from an optimization point of view.  Since convergence to
NE is relatively fast, this has the potential to accelerate general
cooperative multi-agent optimization \cite{bistritz2020cooperative}.

Controlling unknown games with bandit feedback is a new online learning paradigm. The game to be controlled is a non-stationary
but highly structured bandit. Viewing game control as a bandit problem can result in new, powerful game control techniques. Generalizing our game control setting beyond linear constraints can help tackle a wider class of problems. 

\appendices

\section{Proofs from Section \ref{sec:Algorithm}}
\subsection{Proof for Theorem \ref{thm:converge}}

For a fixed $\balpha$, the ODE limit of iteration \eqref{x-iter} is 
$\dot{\xx}(t)=F(\xx(t), \balpha)+\bb_\XX(\xx(t)).$
Here $\bb_\XX(\xx(t))$ is the boundary projection term required to keep $\xx(t)$ inside $\XX$ at all terms. Since the game is strongly monotone for any $\balpha$, \cite[Theorem 9]{rosen1965existence} shows that for a fixed $\balpha$, there exists a unique globally stable equilibrium of the above ODE, which is the NE $\xstar(\balpha)$. Additionally, we have shown that $\|\balpha_t\|$ has bounded expectation in Lemma \ref{lemma:bounded} for all $t$, which implies that with probability 1, $\|\balpha_t\|<\infty$ for all $t$.  Then Part (a) of Lemma \ref{lemma:xstar-prop} shows that $\xstar(\balpha)$ is Lipschitz continuous in $\balpha$. Then \cite[Lemma 8.1]{VSBbook} shows that $\xx_t\to\xstar(\balpha_t)$. Now, we can write iteration \eqref{alpha-iter} as
$$\balpha_{t+1}=\balpha_t+\epsilon_t'(g(\balpha_t)-\balpha_t)+\epsilon_t'\gamma\bA(\xx_t-\xstar(\balpha_t)),$$
where $\epsilon_t'=\epsilon_t/\gamma$. The ODE limit for this iteration is 
\begin{equation}\label{ode-alpha}
    \dot{\balpha}(t)=g(\balpha(t))-\balpha(t).
\end{equation}
 The set of fixed points of $g(\cdot)$ is $\Nopt$, which is non-empty as shown in Lemma \ref{lemma:existence}. Let $\balphastar$ be any such point. \cite[Theorem 3.1]{VSB-nonexp} shows that the solutions of $\dot{\boldsymbol{y}}(t)=f(\boldsymbol{y}(t))-\boldsymbol{y}(t)$ converge to the set of fixed points of $f(\cdot)$, when $f$ is non-expansive under $\ell_2$ norm and the set of fixed points of $f(\cdot)$ is non-empty. Since $\balpha(t)$ converges to a point in $\Nopt$, this completes our proof that the iterates $\balpha_t$ converge to $\Nopt$.
\subsection{Proof for Theorem \ref{thm:finite}}
Let $T_0$ be the time after which the bounds in Lemma \ref{lemma:inner} and Lemma \ref{lemma:outer} hold. Then for $t>T_0$,
    \begin{align*}
        &\EE[\|\bA\xx_t-\bellstar\|^2]\\
        &\leq 2\EE[\|\bA(\xx_t-\xstar(\balpha_t))\|^2]+2\EE[\|\bA\xstar(\balpha_t)-\bellstar\|^2]\\
        &\leq 2\|\bA\|^2\EE[\|\xx_t-\xstar(\balpha_t)\|^2]+2\EE[\|\bA\xstar(\balpha_t)-\bellstar\|^2]\\
        &\leq C\left(\eta_t+1/(t\epsilon_t)\right),
    \end{align*}
    for $C=\max\{2\|\bA\|^2C',2C''\}$.

\section*{References}
\begingroup
\setlength{\parskip}{0pt}
\vspace*{-19pt}

\end{document}